\theoremstyle{plain}
\newtheorem{theorem}{Theorem}
\newtheorem{property}[theorem]{Property}
\theoremstyle{definition}
\newtheorem{definition}[theorem]{Definition}
\newtheorem{remark}[theorem]{Remark}
\newcommand{\R}{\mathbb{R}}  
\newcommand{\N}{\mathbb{N}}  
\DeclareMathOperator{\conv}{conv} 
\newcommand{\leA}{\le_A} 
\newcommand{\leE}{\le_E} 
\newcommand{\BQP}{\textup{BQP}} 
\newcommand{\CUT}{\textup{CUT}} 
\newcommand{\SSP}{\textup{SSP}} 
\newcommand{\VCP}{\textup{VCP}} 
\newcommand{\PACK}{\textup{PACK}} 
\newcommand{\PART}{\textup{PART}} 
\newcommand{\DCP}{\textup{DCP}} 
\newcommand{\AssP}[1]{\textup{#1AP}} 
\newcommand{\DAP}{\AssP{2}} 
\newcommand{\TAP}{\AssP{3}} 
\newcommand{\PAP}{\AssP{$p$-}} 
\newcommand{\LOP}{\textup{LOP}} 
\newcommand{\QLOP}{\textup{QLOP}} 
\newcommand{\QAP}{\textup{QAP}} 
\newcommand{\QSAP}{\textup{QSAP}} 
\newcommand{\TSP}{\textup{TSP}} 
\newcommand{\ATSP}{\textup{ATSP}} 
\title{A special role of Boolean quadratic polytopes\\
among other combinatorial polytopes}
\author{Aleksandr Maksimenko}
\address{Laboratory of Discrete and Computational Geometry, P.G. Demidov Yaroslavl State University, ul. Sovetskaya 14, Yaroslavl 150000, Russia} 
\email{maximenko.a.n@gmail.com}
\thanks{Supported by the~project No. 477 of P.G. Demidov Yaroslavl State University within State Assignment for Research.}
\begin{document}

\begin{abstract}
We consider several families of combinatorial polytopes
 associated with the~following NP-complete problems:
 maximum cut, 
 Boolean quadratic programming, 
 quadratic linear ordering,
 quadratic assignment,
 set partition, 
 set packing, 
 stable set,
 3-assignment.
For comparing two families of polytopes we use the~following method. 
We say that a~family $P$ is affinely reduced to a~family $Q$
if for every polytope $p\in P$ there exists $q\in Q$ 
such that $p$ is affinely equivalent to $q$ or to a~face of $q$,
where $\dim q = O((\dim p)^k)$ for some constant $k$.
Under this comparison the~above-mentioned families
 are splitted into two equivalence classes.
We show also that these two classes are simpler (in the~above sence)
 than the~families of polytopes of the~following problems:
 set covering,
 traveling salesman,
 0-1 knapsack problem, 
 3-sa\-tis\-fi\-abi\-li\-ty, 
 cubic subgraph,
 partial ordering.
In particular, Boolean quadratic polytopes appear as faces of polytopes 
 in every of the~mentioned families.
\end{abstract}

\maketitle

\section{Introduction}

In 1954, Dantzig, Fulkerson, and Johnson~\cite{Dantzig:1954} solved 
 a~49-city traveling salesman problem via considering 
 a~polytope of this problem.
This idea turned out quite fruitful.
Since then, there were publicated hundreds of papers 
 about properties of various combinatorial polytopes. 
In particular, a~lot of attention was paid to properties of graphs (1-skeletons) of polytopes
 (such as criterion of adjacency, diameter, clique number) 
 and complexities of extended formulations.

In this paper we compare combinatorial characteristics of complexity
 for several families of such polytopes.
The main results was anonced in 2011~\cite{Maksimenko:2011c} and in 2012~\cite{Maksimenko:2012a}, but without the~proofs.

It is natural to consider the~following method for comparing of polytopes.
If a~polytope $p$ is affinely equivalent to a~(not necessary proper) face of a~polytope $q$,
 then $p$ can not be more complicated than $q$ in any reasonable sense.
Below this fact is denoted by $p \leA q$.
If we can compare two polytopes in this sense, 
 then we can compare their characteristics of complexity.
For example, if $p \leA q$ then the~graph of $p$ is a~subgraph of $q$, 
 moreover, the~face lattice of $p$ is embedded into the~face lattice of $q$.
 
We note that in recent times the~most widespread 
 is a~little bit different method for comparing polytopes.
This is related with the~notion of an extended formulation of a~polytope 
 (see for example~\cite{Conforti:2010} and~\cite{Kaibel:2011}). 
A polytope $q$ is called an \emph{extension} (or an \emph{extended formulation}) 
 of a~polytope $p$ if there exists an affine map $\pi$ with $\pi(q) = p$.
The \emph{extension complexity} of a~polytope $p$ is the~size (i.e. number of facets)
 of its smallest extension.
We will denote by $p \leE q$ the~fact that a~(not necessary proper) face of a~polytope $q$
 is an extension of $p$.
It is clear, that 
$$
p \leA q \Rightarrow p \leE q.
$$

For example, it is well known that if $p$ is a convex polytope with $n$ vertices 
and $\Delta_n$ is a simplex with $n$ vertices, then 
\begin{equation}
\label{eq:simplex}
p \leE \Delta_n.
\end{equation}
As a rule, the number of vertices of a combinatorial polytope $p$ is exponential in the dimension $\dim p$. 
Hence, $\Delta_n$ in \eqref{eq:simplex} has exponential dimension
and the comparison \eqref{eq:simplex} becomes useless in practical sense.
So, it is natural to restrict the dimension of $q$ in $p \leE q$ 
by some polinomial of $\dim p$.
More precisely, let $P$ and $Q$ are sets of polytopes,
we will write $P \propto_E Q$
if there exists $k\in \N$ such that for every polytope $p\in P$ there is $q\in Q$ with $p \leE q$ and  $\dim q = O \left((\dim p)^k\right)$.
For example, Yannakakis~\cite{Yannakakis:1991} showed
 that the~matching polytopes and the~vertex packing polytopes 
 are not more complicated (in the~sense of $\propto_E$) than the~traveling salesman polytopes.
In~\cite{Maksimenko:2011a} it was shown that polytopes 
 of any linear combinatorial optimization problem%
\footnote{\label{foot:LCOP}There's considered the following formulation for a linear combinatorial optimization problem. Assume that in a given set $E$, 
each element $e \in E$ has some weight $c(e) \in R$, and
$f:\ 2^E \rightarrow \{0, 1\}$ is a polynomially (with respect to $|E|$) calculable rule. 
Let $S = \{s \subseteq E \mid f(s) = 1\}$ 
be the set of all feasible solutions of the problem. 
We seek for a subset $s \in S$ with the maximal (minimal) summary weight of elements.} 
(among them are cut polytopes, 0-1 knapsack polytopes, 3-sa\-tis\-fi\-abi\-li\-ty polytopes and many other combinatorial polytopes) 
 are not more complicated than the~traveling salesman polytopes.
One year later in~\cite{Maksimenko:2012b} it was shown that polytopes 
 of any linear combinatorial optimization problem 
 are not more complicated than the~cut polytopes and the~0-1 knapsack polytopes.
In general, it seems that this statement is true for the~family of polytopes
 of any known NP-hard problem.
(The results of this paper are another confirmation.) 
That is families of polytopes of NP-hard problems 
 are not distinguishable while comparing by $\leE$.
On the~one hand, this is convenient for obtaining the~results of a~general nature.
For example, since the~extension complexity of cut polytopes is superpolynomial~\cite{Fiorini:2012},
 then the~same is true for (almost) all other families of combinatorial polytopes.
On the~other hand, it is well known that these families are significantly different from each other.
For example, any two vertices of the~cut polytope 
 constitute an edge (1-face) of this polytope 
 (i.e. the~graph of this polytope is complete)~\cite{Padberg:1989}.
Whereas the~checking of nonadjacency of vertices of traveling salesman polytopes is NP-complete~\cite{Papadimitriou:1978}.
So, it is useful to have a~more sensitive method of comparing, like the~mentioned above $\leA$.

With replacing $\leE$ by $\leA$ in the definition of $\propto_E$ we will 
get the definition of \emph{affine reduction} $\propto_A$.
Below we will consider only 0/1-polytopes%
\footnote{0/1-polytope is the convex hull of a subset of the vertices 
of the cube $[0,1]^d$.}.
So, we would like to start with the following example.
It has been shown by Billera and Sarangarajan~\cite{Billera:1996} that
every 0/1-polytope $p \subset \R^d$ with $k$ vertices
is affinely equivalent to a face of the asymmetric traveling salesman polytope:
\[
	p \leA \ATSP_n \quad \text{for } n \ge (4(2^d-k)+1)d. 
\]
Note, that $n$ is exponential in $d$.
Hence, this does note imply the affine reduction of 0/1-polytopes 
to asymmetric traveling salesman polytopes.
Moreover, the family of 0/1-polytopes can not be affinely reduced 
to the family $\{\ATSP_n\}$ for the following reason (see~\cite[p. 12]{Billera:1996}).
There are at least $2^{2^{d-2}}$ combinatorially non-equivalent 
$d$-dimensional 0/1-polytopes for $d\ge 6$~\cite[Proposition 8]{Ziegler:2000}.
On the other hand, if $f$ is the total number of faces of $\ATSP_n$, 
then
\[
	f \le n^2 (n!)^{n^2} \le n^{n^3+2} = 2^{(n^3+2)\log_2 n}.
\]
Therefore, if every $d$-dimensional 0/1-polytope is a face of $\ATSP_n$, then $n$ is superpolynomial in $d$.

In~\cite{Maksimenko:2012c, Maksimenko:2013b} it was shown that
 so-called double covering polytopes are affinely reduced to 
 knapsack polytopes, set covering polytopes, cubic subgraph polytopes, 
 3-SAT polytopes, partial order polytopes, and traveling salesman polytopes.
The linear optimization on double covering polytopes is NP-hard 
 and the~problem of checking nonadjacency on these polytopes is NP-complete~\cite{Matsui:1995}.
Consequently, the~same is true for the~mentioned families.

In this paper we show, that (in the~sense of relation $\propto_A$)
 Boolean quadratic polytopes (and cut polytopes), quadratic linear ordering polytopes, and quadratic assignment polytopes lie in one equivalence class.
Set partition polytopes, set packing polytopes, 
 stable set polytopes, and 3-assignment polytopes
 lie in another (more complicated) equivalence class
 and they are simpler (w.r.t. $\propto_A$) then double covering polytopes.
Thus, the~family of Boolean quadratic polytopes is more pure example 
 of a~family of polytopes associated with NP-hard problems.
They do not have extra details like NP-completness of adjacency relation and the~like.
Naturally, the~following question arises. 
Whether this family is ``the purest'' one?
More precisely, is there some family of polytopes $P$ associated with NP-hard problem such that $P \propto_A \BQP$, but $\BQP \not\propto_A P$?
The answer to this question was provided in~\cite{Maksimenko:2013a}.
One can construct infinitely many families of Boolean $p$-power polytopes
 (Boolean quadratic polytopes is called Boolean $2$-power polytopes)
 such that each $(p+1)$-family is more pure than $p$-family for $p\in\N$, $p \ge 2$.

The rest of the~paper is organized as follows.
Section~\ref{sec:AffBQP} provides a~definition of affine reducibility and its properties.
As an example we show that Boolean quadratic polytopes ($\BQP$) are affinely reduced to stable set  polytopes ($\SSP$), but $\SSP$ can not be affinely reduced to $\BQP$.
In section~\ref{sec:PackPart} it is shown that $\SSP$ are equivalent 
 (in the~sense of affine reduction) 
 to set packing polytopes and set partition polytopes.
In section~\ref{sec:DCP} we consider double covering polytopes ($\DCP$) 
 and prove that $\SSP \propto_A \DCP$, but $\DCP \not\propto_A \SSP$.
In section~\ref{sec:3Ass} it is shown that $\SSP$ are equivalent to 3-assignment polytopes.
In section~\ref{sec:QLOP} we consider quadratic linear ordering polytopes
 and quadratic assignment polytopes
 and show that they are equivalent to $\BQP$.

\section{Affine reducibility
         and Boolean quadratic polytopes}
\label{sec:AffBQP}

Let's consider the~following partial order on the~set of all convex polytopes.

\begin{definition}
\label{DefRel}
The fact that a~polytope $p$ is affinely equivalent to a~polytope $q$
 or to a~face of $q$ will be denoted by $p \leA q$.
If $p$ is affinely equivalent to $q$ itself we will use designation $p =_A q$.
\end{definition}

This relation is useful for estimation of combinatorial characteristics of polytopes.
For example, if $p \leA q$ then the~number of $i$-faces of $p$ 
 is not greater than the~number of $i$-faces of $q$ for $0 \le i \le \dim p$.
It is clear also that the~number of facets of $p$ is not greater than
 the~number of facets of $q$.
Furthermore, the~graph (1-skeleton) of $p$ is a~subgraph of the~graph of $q$.
Hence we can compare clique numbers of these graphs and the~like.
In the~case $p \leA q$ we can also compare 
 the~extension complexities of these polytopes~\cite{Fiorini:2012}.
The same is true for the~rectangle covering bound~\cite{Fiorini:2012}.

On the~other hand, this relation is useless,
 for example, for estimating diameters of graphs of polytopes.
But the~diameter of graph can hardly be seen as a~characteristic of complexity.
It is not greater than 2 for \TSP polytopes~\cite{Padberg:1974}
 and it is equal to 1 for Boolean quadratic polytopes~\cite{Padberg:1989}.
This does not correspond to the~real complexity of these problems.

It turns out that this relation allows to form up 
 the~currently known families of combinatorial polytopes in hierarchical order.
At the~very bottom of this hierarchy there are
 Boolean quadratic polytopes and cut polytopes.

\emph{Boolean quadratic polytope} is the~convex hull of the~set
\begin{equation}
\label{BQP}
	\BQP_n = \left\{x=(x_{ij})\in\{0,1\}^{\frac{n(n+1)}2} \mid 
	        x_{ij} = x_{ii} x_{jj}, \; 1\le i < j\le n\right\}.
\end{equation}
It should be noted that the~notation $\BQP_n$ is commonly used 
for the~convex hull, but for the sake of brevity 
we do not make distinctions between polytopes themselves and the~sets of its vertices.
The same remark applies to all other polytopes discussed below.

This polytope is also known as a~correlation polytope~\cite{Deza:1997}. 
Moreover $\BQP_n$ is directly related by so-called covariant mapping 
 with cut polytope, usually denoted by $\CUT_n$~\cite{DeSimone:1990}.
Using the~notation of definition~\ref{DefRel} 
 this relationship may be written as 
 $$\BQP_n =_A \CUT_{n+1}.$$

Let us note that $\BQP_n$ is a face of $\BQP_{n+1}$ 
 defined by $x_{n+1, n+1} = 0$.

\begin{property}
\label{eq:LinBQP}
$\BQP_n \leA \BQP_{n+1}$, $n\in\N$.
\end{property}

Note also that such relations are normal for families of combinatorial polytopes.

In order to illustrate the~basic ideas of this paper, 
we consider another family of polytopes, which is closely related with $\BQP_n$.

Let $G=(V, E)$ is undirected graph with the~set of vertices
$V=\{v_1, v_2, \ldots , v_k\}$ and the~set of edges $E$.
To each vertex $v_i$, $1\le i \le k$, we associate the~component $y_i$ of the~vector
 $y=(y_1, \ldots, y_k)\in\R^k$. 
The \emph{stable set polytope} of a~graph $G$ is the~convex hull of the~set
\begin{equation}
\label{SSP}
	\SSP_k = \left\{y\in\{0,1\}^k \ | \ 
	          y_i + y_j \le 1 \text{ for every edge } \{v_i, v_j\}\in E
\right\}.
\end{equation}
This polytope also known as a~\emph{vertex packing polytope}.
Futhermore, by affine mapping $z_i = 1 - y_i$, $1\le i \le k$,
it is related to the~\emph{vertex covering polytope} of a~graph $G$:
$$
	\VCP_k = \left\{z\in\{0,1\}^k \ | \ 
	          z_i + z_j \ge 1 \text{ for } \{v_i, v_j\}\in E
\right\}.
$$
I.e., $\SSP_k =_A \VCP_k$.

Let us note that $\BQP_n$ is uniquely determined for a~fixed $n$, 
 whereas the~notation $\SSP_k$ hides a~set of $k$-dimensional polytopes.
For example, if a~graph $G$ has no edges then $\SSP_k$ is a~cube.
If $G$ is a~complete graph then $\SSP_k$ is a~simplex.
Hereinafter the~$\SSP_k$ will be associated with 
 ``the most complicated'' polytope in this set.
More precisely, for a~polytope $p$ and for fixed $k$ inequality
$$
p \leA \SSP_k,
$$
 means that there exists $q\in \SSP_k$ such that $p \leA q$. 

Generalizing this agreement we obtain

\begin{definition}
\label{DefGeneral}
Let $P$ and $Q$ are sets of polytopes.
Then the~record
$$
P \leA Q
$$
indicates that for every $p\in P$ there exists $q\in Q$ such that $p \leA q$.
\end{definition}

This agreement allows us to deduce an analogue of the~property~\ref{eq:LinBQP} for $\SSP_k$.

\begin{property}
$\SSP_k \leA \SSP_{k+1}$.
\end{property}

Furthermore, using this notation $\BQP_n$ and $\SSP_k$ can be compared.

\begin{theorem}
\label{Theorem1}
$\BQP_n \leA \SSP_k$ for $k = n(n+1)$.
\end{theorem}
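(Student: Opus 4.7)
My plan is to construct an explicit graph $G_n$ with $|V(G_n)| = n(n+1)$ and show that $\BQP_n$ is affinely equivalent to a suitable face of $\SSP(G_n)$.

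I would label $V(G_n) = \{a_i^0, a_i^1 : i \in [n]\} \cup \{b_{ij}^0, b_{ij}^1 : 1 \leq i < j \leq n\}$, so that $|V(G_n)| = 2n + 2\binom{n}{2} = n(n+1)$. The intended correspondence is $a_i^1 \leftrightarrow x_{ii}$ and $b_{ij}^1 \leftrightarrow x_{ij}$, with the $\epsilon=0$ versions as complements. To each $S\subseteq[n]$ I associate the candidate stable set
\[
T_S = \{a_i^1 : i \in S\} \cup \{a_i^0 : i \notin S\} \cup \{b_{ij}^1 : \{i,j\} \subseteq S\} \cup \{b_{ij}^0 : \{i,j\} \not\subseteq S\},
\]
and the affine map $\phi\colon\R^{n(n+1)/2}\to\R^{n(n+1)}$ defined componentwise by $y_{a_i^1}=x_{ii}$, $y_{a_i^0}=1-x_{ii}$, $y_{b_{ij}^1}=x_{ij}$, $y_{b_{ij}^0}=1-x_{ij}$. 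Then $\phi$ is an affine injection sending vertices of $\BQP_n$ bijectively onto $\{\chi_{T_S}:S\subseteq[n]\}$, and its image has affine dimension $n(n+1)/2$.

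Next, I would choose the edges of $G_n$ so that (a) each $T_S$ is a stable set, and (b) the face $F$ of $\SSP(G_n)$ defined by the tight edge equalities $y_{a_i^0}+y_{a_i^1}=1$ and $y_{b_{ij}^0}+y_{b_{ij}^1}=1$ has exactly the $\chi_{T_S}$'s as vertices. A baseline set of edges $\{a_i^0,a_i^1\}$, $\{b_{ij}^0,b_{ij}^1\}$, $\{b_{ij}^1,a_i^0\}$, $\{b_{ij}^1,a_j^0\}$ makes (a) an immediate case check, forces the ``easy'' implication $b_{ij}^1\in T \Rightarrow \{i,j\}\subseteq S$, and gives $F$ the correct codimension $n+\binom{n}{2}$, so that $\dim F = n(n+1)/2 = \dim\BQP_n$.

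The main obstacle is the ``hard'' direction: enforcing $\{i,j\}\subseteq S \Rightarrow b_{ij}^1\in T$; equivalently, forbidding any stable set of $G_n$ containing $a_i^1, a_j^1, b_{ij}^0$ simultaneously. This is a three-way exclusion among these vertices, and no single edge among them can be added without destroying the stability of some $T_S$ with $|S\cap\{i,j\}|=1$. I would resolve this by further refining the face with additional tight valid inequalities of $\SSP(G_n)$ beyond the edge inequalities --- for instance a clique or odd-hole inequality coming from a small gadget of extra edges, tight at every $\chi_{T_S}$ but strictly violated at any rogue stable set of the above form. Verifying that such an augmentation exists and cuts $F$ down to $\phi(\BQP_n)$ --- by a case analysis over the shape of a putative rogue stable set --- is the technical heart of the proof; once it is done, $\phi$ directly witnesses $\BQP_n =_A F \leA \SSP(G_n)$.
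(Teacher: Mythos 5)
Your setup---doubling the variables to get $n(n+1)$ coordinates and carving out a face by making the complementarity edges tight---is the right general strategy, and you have correctly isolated the obstacle: on the face $F$ cut out by $y_{a_i^0}+y_{a_i^1}=1$ and $y_{b_{ij}^0}+y_{b_{ij}^1}=1$ there remain ``rogue'' stable sets containing $a_i^1$, $a_j^1$ and $b_{ij}^0$ simultaneously. But the repair you defer to---keeping the map $\phi$ with $y_{b_{ij}^0}=1-x_{ij}$ and cutting off the rogues with a gadget of extra edges and further tight valid inequalities---cannot be carried out. Any valid inequality of $\SSP(G_n)$ that is tight at every $\chi_{T_S}$ is constant on the affine hull of these points, which (since $\BQP_n$ is full-dimensional) is exactly the subspace $\{y\,:\,y_{a_i^0}+y_{a_i^1}=1,\ y_{b_{ij}^0}+y_{b_{ij}^1}=1\}$; every rogue 0/1 point lies in that subspace, so \emph{every} face of $\SSP(G_n)$ containing all the $\chi_{T_S}$ also contains every rogue point that is a stable set. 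The only way out is to destroy each rogue set $R=(T_S\setminus\{b_{ij}^1\})\cup\{b_{ij}^0\}$ by an edge joining $b_{ij}^0$ to some vertex of $T_S\setminus\{b_{ij}^1\}$; but each such candidate edge already occurs inside some intended stable set $T_{S'}$ (for instance $\{b_{ij}^0,a_i^1\}\subseteq T_{\{i\}}$, $\{b_{ij}^0,b_{lm}^1\}\subseteq T_{\{l,m\}}$ for $\{l,m\}\neq\{i,j\}$, and all complement--complement pairs lie in $T_\emptyset$). So with your choice of $\phi$ the set $\phi(\BQP_n)$ is never a face of $\SSP(G_n)$, for any graph on these vertices in which all $T_S$ are stable; the ``technical heart'' you postpone is not merely hard but impossible.

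The paper's proof sidesteps this by not taking the second copy of $x_{ij}$ to be its complement: it uses $s_{ij}=x_{ij}$, $t_{ij}=x_{ii}-x_{ij}$, $u_i=x_{ii}$, $\bar u_i=1-x_{ii}$ (again $n(n+1)$ variables). The identity $s_{ij}+t_{ij}+\bar u_i=1$ replaces your three-way exclusion, and the edges $\{s_{ij},\bar u_j\}$, $\{t_{ij},u_j\}$, $\{s_{ij},\bar u_i\}$, $\{t_{ij},\bar u_i\}$, $\{u_i,\bar u_i\}$ are all consistent with the intended stable sets. On the subface where $u_j+\bar u_j=1$ one cannot have $s_{ij}=t_{ij}=1$ (these force $\bar u_j=0$ and $u_j=0$ respectively), so $s_{ij}+t_{ij}+\bar u_i\le 1$ is valid there, and imposing it with equality yields a genuine face affinely equivalent to $\BQP_n$. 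To salvage your argument, replace $b_{ij}^0$ by a variable representing $x_{ii}-x_{ij}$ and redo the edge bookkeeping accordingly.
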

(A similar result is given in~\cite{Fiorini:2012}, but they used relation $\leE$ 
 and $k = 2 n^2$.)

\begin{proof}
The equality $x_{ij} = x_{ii} x_{jj}$ in~\eqref{BQP} 
 is equivalent to inequalities
\begin{equation}
\label{eq:Clique}
\begin{aligned}
   x_{ii} - x_{ij} &\ge 0, \\
   x_{jj} - x_{ij} &\ge 0, \\
   x_{ii} + x_{jj} - x_{ij} &\le 1,
\end{aligned}
\end{equation}
subject to $x_{ij}\in\{0, 1\}$.
It remains to transform each of them in an inequality of the~form $y_l + y_m \le 1$ 
from~\eqref{SSP}.
For this we introduce $n (n +1)$ new variables:
\begin{equation}
\label{eq:BQP2SSP}
\begin{aligned}
  s_{ij} &= x_{ij},           & 1 &\le i  <  j \le n,\\
  t_{ij} &= x_{ii} - x_{ij},  & 1 &\le i  <  j \le n,\\
  u_i    &= x_{ii},           & 1 &\le i \le n,\\
  \bar{u}_i    &= 1 - x_{ii}, & 1 &\le i \le n.\\
\end{aligned}
\end{equation}
Then the~restrictions~\eqref{eq:Clique} are equivalent to
$$
\begin{aligned}
   s_{ij} + \bar{u}_j & \le 1, \\
   t_{ij} + u_j       & \le 1, \\
   u_i    + \bar{u}_i & =   1, \\
   s_{ij} + t_{ij} + \bar{u}_i & = 1,
\end{aligned}
$$
subject to integrality of all variables.
Obviously, the~last two equalities (more precisely, $n (n +1) / 2$ equalities)
define some face of a~polytope $\SSP_k$, where $k = n (n +1)$,
defined by the~system of $n (2n - 1)$ inequalities
$$
\begin{aligned}
   s_{ij} + \bar{u}_j & \le 1, \\
   t_{ij} + u_j       & \le 1, \\
   u_i    + \bar{u}_i & \le 1, \\
   s_{ij} + \bar{u}_i & \le 1, \\
   t_{ij} + \bar{u}_i & \le 1.
\end{aligned}
$$
Moreover, the~equalities~\eqref{eq:BQP2SSP} connect this face
with the~polytope $\BQP_n$ by nondegenerate affine mapping.
\end{proof}

\begin{remark}
\label{Remark1}
For $k \ge 2$ relation $\SSP_k \leA \BQP_n$ is not satisfied for any $n$.
Since $\BQP_n$ is a~2-neigh\-bor\-ly polytope~\cite{Padberg:1989}.
\end{remark}

Relying on definitions~\ref{DefRel} and~\ref{DefGeneral},
we can introduce an analogue of Cook--Karp--Levin polynomial reducibility~\cite{Garey:1979} 
for families of polytopes (as it was done in~\cite{Maksimenko:2012c}).

\begin{definition}
\label{DefAff}
A family of polytopes $P$ 
 is \emph{affinely reduced} to a~family $Q$
 if there exists $r\in \N$ such that
 for every polytope $p\in P$ there is $q\in Q$ with $p \leA q$ 
 and  $\dim q = O \left((\dim p)^r\right)$.
Designation: $P \propto_A Q$.  
\end{definition}


In such terminology the~theorem~\ref{Theorem1} and the~remark~\ref{Remark1} 
 take the~following form:
$$
\BQP \propto_A \SSP, \quad \SSP \lefteqn{\;\not}\propto_A \BQP,
$$
where $\BQP = \{\BQP_n\}$, $\SSP = \{\SSP_k\}$.

We list some obvious properties of this kind of reduction. 

\begin{theorem}
\label{ThProp}
Let $P \propto_A Q$. 
Suppose that there are polytopes in $P$ with some of the~following properties:

1) superpolynomial (in the~dimension of a~polytope) number of vertices and facets,

2) superpolynomial clique number of the~graph of a~polytope,

3) NP-completness of nonadjacency relation,

4) superpolynomial extension complexity,

5) superpolynomial rectangle covering bound.

\noindent
Then there are polytopes in $Q$ with the~same properties.
\end{theorem}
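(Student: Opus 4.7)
The plan is to show that each of the five quantities is monotone under $\leA$ and then combine this with the polynomial dimension bound in Definition~\ref{DefAff}. Recall that $p \leA q$ means that, up to affine equivalence, $p$ is a face $F$ of $q$. Then every vertex of $F$ is a vertex of $q$; every facet of $F$ has the form $F\cap G$ for a facet $G$ of $q$ not containing $F$; every $1$-face of $F$ is a $1$-face of $q$; and two vertices $u,v \in F$ are non-adjacent in $F$ if and only if they are non-adjacent in $q$, because the smallest face of $q$ containing $u$ and $v$ is automatically contained in $F$. This yields the numerical parts of properties~1 and~2: vertex counts, facet counts, and clique numbers for $p$ are all bounded above by the corresponding quantities for $q$. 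For properties~4 and~5 I would invoke the monotonicity of extension complexity and rectangle covering bound under taking faces, as established in~\cite{Fiorini:2012}.

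Next I would turn the inequalities into asymptotic statements. Let $g$ denote any of the characteristics above, so that $g(p) \le g(q)$ whenever $p \leA q$. If $\dim q \le c(\dim p)^r$ and $g(q)$ were bounded by a polynomial $P(\dim q)$, then $g(p) \le P(c(\dim p)^r)$ would be polynomial in $\dim p$. Contrapositively, superpolynomial growth of $g(p)$ in $\dim p$ forces superpolynomial growth of $g(q)$ in $\dim q$. Applied to properties~1,~2,~4, and~5, this completes the transfer.

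Property~3 is the only item requiring extra care, since NP-completeness is a computational statement rather than a combinatorial bound. The hard part is that Definition~\ref{DefAff} asserts the existence of $q \in Q$ with $p \leA q$ but does not guarantee that $q$, nor the affine identification of $p$ with a face of $q$, can be produced from $p$ in polynomial time. To obtain a Karp reduction I would therefore add the implicit assumption, which holds by inspection in every concrete construction of this paper, that $q$ and the affine map on the vertices of $p$ are computable in time polynomial in the input size. Under this assumption the non-adjacency equivalence from the first paragraph turns the construction directly into a polynomial-time reduction from the non-adjacency problem on $P$ to that on $Q$, preserving NP-completeness.
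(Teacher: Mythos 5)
Your proof is correct and follows exactly the route the paper intends: the paper in fact gives no proof of Theorem~\ref{ThProp}, merely listing beforehand the monotonicity facts (face vertices/facets, induced subgraph of the 1-skeleton, extension complexity and rectangle covering bound under $\leA$) that you assemble with the polynomial dimension bound. Your explicit flagging of the computability caveat for property~3 (that Definition~\ref{DefAff} is purely existential, so a Karp reduction needs the map to be polynomial-time constructible) is a genuine subtlety the paper silently glosses over, and you handle it appropriately.
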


%
%

\section{Set packing and set partition polytopes}
\label{sec:PackPart}


Let $G = \{g_1, \ldots, g_n\}$ be a~finite set
and $S = \{S_1, \ldots, S_d\} \subseteq 2^G$ be a~set of subsets of $G$.
Consider a~subset $T \subseteq S$.
If every $g_i \in G$ belongs no more (no less) than one of elements of $T$
 then $T$ is called a~\emph{packing (covering) of the~set $G$}.
Covering, which is both the~packing, called a~\emph{partition of the~set $G$}.

Let $A = (a_{ij})$ be $n\times d$ matrix of incidences of
 elements of $G$ and elements of $S$:
$$
a_{ij} = 
\begin{cases}
1, &\text{for $g_i\in S_j$,}\\
0, &\text{otherwise.}
\end{cases}
$$
For every subset $T\subseteq S$ we consider its characteristic vector 
 $x=(x_j)\in\R^d$:
$$
x_j = 
\begin{cases}
1, &\text{if $S_j\in T$,}\\
0, &\text{otherwise.}
\end{cases}
$$
Denote the~set of all such characteristic vectors by $\PACK_d = \PACK(S)$.
It is evident that 
$$
\PACK_d = \{x \in \{0,1\}^d \mid A x \le \mathbf{1}\},
$$
where $\mathbf{1}$ is the~$n$-dimensional all 1 vector.
The convex hull of $\PACK_d$ is called the~\emph{set packing polytope}.

Partition polytopes are defined similarly.
The set of vertices $\PART_d$ of the~\emph{set partition polytope} 
satisfies the~equality
\begin{equation}
\label{eq:Part}
A x = \mathbf{1}.
\end{equation}
It is clear that a~set partition polytope is a~face of a~set packing polytope:
\begin{equation}
\label{InePartPack}
  \PART_d \leA \PACK_d.
\end{equation}

Note that a~stable set polytope is a~special case of a~set packing polytope:
$$
  \SSP_k \leA \PACK_d \quad \text{for } d = k.
$$
It is not difficult to prove that
 the~families $\SSP = \{\SSP_k\}$, $\PACK = \{\PACK_d\}$ and $\PART = \{\PART_d\}$
 are equivalent in terms of affine reducibility.
 
\begin{theorem}
\label{ThClass1}
$\SSP \propto_A \PART \propto_A \PACK \propto_A \SSP$.
\end{theorem}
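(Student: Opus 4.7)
The plan is to establish the three reductions separately. The middle one, $\PART \propto_A \PACK$, is immediate from \eqref{InePartPack}: for any incidence matrix $A$, the corresponding set partition polytope is cut out of the set packing polytope by tightening each inequality $a_i \cdot x \le 1$ into an equation, hence is a face of the packing polytope in the same ambient dimension~$d$.

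For $\PACK \propto_A \SSP$ I would observe that, at the level of $0/1$-vertices, every packing polytope already is a stable set polytope. Given the incidence matrix $A$, define a graph $G$ on the column indices $\{1,\ldots,d\}$ by declaring $\{i,j\}$ to be an edge whenever some row of $A$ contains both $i$ and $j$. A $0/1$-vector $x$ satisfies $Ax \le \mathbf{1}$ iff within the support of every row at most one coordinate equals $1$, which is equivalent to $x_i + x_j \le 1$ for every edge of~$G$: one direction is trivial, and for the other one uses that each row's support is a clique of $G$, so any stable set intersects it in at most one vertex. Hence the vertex sets $\PACK_d$ and $\SSP_d$ (for this $G$) coincide, giving $\PACK_d =_A \SSP_d$ with no change of dimension.

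The substantive step is $\SSP \propto_A \PART$, which I would carry out by a slack-variable construction in the spirit of the proof of Theorem~\ref{Theorem1}. Starting from $\SSP_k$ for a graph $G = (V,E)$ with $|V| = k$, introduce
\begin{align*}
\bar y_i &= 1 - y_i \quad (i \in V), \\
s_e &= 1 - y_i - y_j \quad (e = \{i,j\} \in E),
\end{align*}
so that the defining inequalities $y_i + y_j \le 1$ become the equations $y_i + y_j + s_e = 1$, and the trivial relations $y_i + \bar y_i = 1$ also hold. The resulting system has a $0/1$ coefficient matrix and right-hand side $\mathbf{1}$, so it is of the form \eqref{eq:Part} and defines a set partition polytope with $d = 2k + |E| = O(k^2)$ columns. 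Each auxiliary variable is uniquely determined by the $y$-coordinates through the equations above, so the projection onto the $y$-variables is an affine bijection from this partition polytope onto $\SSP_k$. Hence $\SSP_k =_A \PART_d$ with $d$ polynomial in $k$.

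The only delicate point is to check that the construction of the third step truly yields a partition polytope in the sense of the paper's definition, i.e.\ that the coefficient matrix is $0/1$ and the right-hand side is the all-ones vector; both hold by direct inspection. The $O(k^2)$ bound on $d$ then provides the required polynomial blow-up, closing the cycle $\SSP \propto_A \PART \propto_A \PACK \propto_A \SSP$.
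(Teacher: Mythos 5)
Your proof is correct and follows essentially the same route as the paper: the identification of $\PACK_d$ with $\SSP_d$ via cliques on row supports, the face relation for $\PART \propto_A \PACK$, and slack variables $s_e = 1 - y_i - y_j$ turning the edge inequalities into partition equations. The only difference is that you also introduce the complement variables $\bar y_i$, giving $d = 2k + |E|$ where the paper makes do with $d = k + |E|$; this is harmless and still within the required polynomial bound.
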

 
\begin{proof}
Show that $\PACK_d$ is a~special case of $\SSP_k$ for $k = d$.
It is sufficient to note that inequality
$$
  x_1 + x_2 + \ldots + x_k \le 1
$$
is equivalent to the~set of inequalities
$$
  x_i + x_j \le 1, \quad 1\le i < j \le k,
$$
 provided $x_i\in \{0,1\}$, $1\le i \le k$.

The reduction $\PART \propto_A \PACK$ is evident (see~\eqref{InePartPack}).

Now we show that $\SSP \propto_A \PART$.
Consider auxiliary variables
$u_{ij} = 1 - y_i - y_j$, $u_{ij} \in \{0, 1\}$.
Then the~inequalities $y_i + y_j \le 1$ in~\eqref{SSP} 
can be replaced by equalities
$$
  y_i + y_j + u_{ij} = 1.
$$
Consequently,
$$
  \SSP_k \le \PART_d \quad \text{for } d = k + |E| \le k(k+1)/2.
$$  
Here $E$ is the~set of edges in~\eqref{SSP}.
\end{proof}

%
%

\section{Double covering polytopes}
\label{sec:DCP}

The name ``double covering polytopes'' was used in~\cite{Maksimenko:2012c}
 for a~family of polytopes considered in~\cite{Matsui:1995}.

\begin{definition}
\label{DefDouble}
 The \emph{double covering polytope} is the~convex hull of the~set 
  $$
    \DCP_{d} = \{x\in\{0,1\}^d \mid B x = \mathbf{2}\},
  $$
  where $B\in\R^{m\times d}$ is a~0-1 matrix, 
  $\mathbf{2}$ is the~$m$-dimensional all 2 vector,
  and each row of $B$ contains exactly four 1's.
\end{definition}

Previously in~\cite{Maksimenko:2012c, Maksimenko:2013b},
 there have been found the~following relations 
 for several families of combinatorial polytopes 
 with the~property of NP-completeness of nonadjacency relation.

\begin{theorem}[\cite{Maksimenko:2012c, Maksimenko:2013b}]
\label{Theorem2}
The family of double covering polytopes is affinely reduced 
 to families of polytopes associated with the~following problems:
 travelling salesman,
 knapsack, 
 set covering,
 3-satisfiability,
 cubic subgraph,
 partial ordering.
\end{theorem}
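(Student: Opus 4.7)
The plan is to construct, for each target family $F$ in the~list, a~polytope $q\in F$ of dimension polynomial in~$d$ together with a~distinguished face of~$q$ that is affinely equivalent to the~given $\DCP_d$. Since every row of the~defining system $Bx=\mathbf{2}$ encodes a~local ``exactly two of four'' constraint on 0/1 variables, it suffices to exhibit, for each target family, a~constant-size gadget realizing this pattern together with a~global scheme for gluing $m$~such gadgets while keeping the~indicator coordinates free.

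The common template is: introduce one indicator variable in~$q$ for each of the~$d$ DCP coordinates; attach, for each of the~$m$ rows of~$B$, a~constant-size gadget whose admissible 0/1 completions on the~four incident indicators are exactly the~six two-of-four patterns; then pass to the~face of~$q$ cut out by fixing all gadget-internal variables to prescribed 0/1 values. For the~knapsack polytope the~$m$ row equations are aggregated into a~single tight inequality with geometrically growing weights, so that the~desired face is a~facet. For set covering, each row is realized by a~small cluster of sets together with slack elements; for 3-sa\-tis\-fi\-a\-bi\-li\-ty and partial ordering, by a~bounded block of 3-clauses or of transitive triples having a~unique satisfying extension for each allowed input; for the~cubic subgraph polytope, by a~degree-3 vertex attached to four pendant edges; and for the~traveling salesman polytope, by Hamiltonian ``clause gadgets'' in the~spirit of the~classical NP-hardness proof for \TSP, whose feasible local tours enumerate the~six two-of-four patterns. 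Since the~total number of gadgets is $m=O(d^4)$ and each contributes $O(1)$ new coordinates, $\dim q$ is polynomial in~$d$, as required by Definition~\ref{DefAff}.

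The main obstacle is not the~existence of a~polynomial-size reduction — this is essentially the~content of the~classical NP-hardness proofs — but its~\emph{affineness}. One must verify, gadget by gadget, that the~face of~$q$ cut out by the~fixed auxiliary coordinates (a)~has exactly the~0/1 vertices prescribed by the~intended bijection, with no extraneous extreme points arising from the~gadget interiors, and (b)~that the~induced bijection extends to a~nondegenerate affine map on the~ambient space. Condition~(a) is the~more delicate of the~two: for the~constraint-based families (3-SAT, set covering, partial ordering) it reduces to a~finite case analysis on each gadget, whereas for TSP and knapsack one must rule out globally feasible ``spurious'' configurations that are consistent with all fixed coordinates but do not correspond to any point of~$\DCP_d$. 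Handling this global verification, uniformly across the~six target families, is where I expect the~technical bulk of the~arguments of~\cite{Maksimenko:2012c, Maksimenko:2013b} to lie.
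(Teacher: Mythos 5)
The paper does not actually prove Theorem~\ref{Theorem2}; it imports it from \cite{Maksimenko:2012c, Maksimenko:2013b}, so there is no in-paper argument to compare against. Judged on its own, your proposal is a proof \emph{plan} rather than a proof: you correctly identify the template (one indicator coordinate per DCP variable, one constant-size gadget per row of $B$, pass to the face obtained by fixing the gadget-internal coordinates) and, more importantly, you correctly isolate the real difficulty --- that an NP-hardness-style reduction is not enough, and one must check that the resulting face has no extraneous vertices and that the vertex bijection is the restriction of a nondegenerate affine map. But for five of the six families the gadgets are only named, not constructed, and the verification you yourself flag as ``the technical bulk'' is entirely deferred. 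That is the gap: the theorem \emph{is} those gadgets.

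Two of the sketches are concretely problematic as stated. For the cubic subgraph polytope, a degree-$3$ vertex with four pendant edges enforces ``exactly $3$ of $4$'' (or ``$0$ of $4$''), not the ``exactly $2$ of $4$'' pattern of a row of $Bx=\mathbf{2}$; some different gadget is needed. For the traveling salesman polytope, invoking ``Hamiltonian clause gadgets in the spirit of the classical NP-hardness proof'' does not work directly: the classical reduction manipulates the cost vector, whereas here you need $\DCP_d$ to appear as a \emph{face}, and the paper's own discussion of the Billera--Sarangarajan embedding shows that careless face embeddings into $\ATSP_n$ force $n$ to be superpolynomial. (The intended route in the cited works goes through intermediate families, e.g.\ $\DCP \propto_A 3SAT \propto_A \TSP$, rather than a direct clause-gadget tour construction.) The one gadget that is essentially complete is the knapsack aggregation $\sum_i M^i(B_i x) = 2\sum_i M^i$ with base $M\ge 5$ so that no carries occur; note however that the resulting set $\{x : c^\top x = C\}$ is a face but need not be a facet, and nothing in Definition~\ref{DefAff} requires it to be one.
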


Now we prove that stable set polytopes are simpler than double covering polytopes.

\begin{theorem}
$\SSP_k \leA \DCP_d$ for $d = k + |E| + 1$,
where $|E|$ is the~number of edges (inequalities) in the~equation~\eqref{SSP}. 
\end{theorem}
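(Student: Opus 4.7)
The plan is to mimic the reduction used for $\SSP \propto_A \PART$ in the proof of Theorem~\ref{ThClass1}, but adapted to the rigid requirements that (a) every equation is of the form ``sum of exactly four 0/1 variables equals 2'' and (b) we must realize $\SSP_k$ as a face of the resulting $\DCP_d$, not the whole polytope.

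Concretely, I would introduce $|E|$ slack variables $u_e$, one for each edge $e=\{v_i,v_j\}\in E$, together with a single global ``anchor'' variable $z$. This gives $d = k + |E| + 1$ coordinates. For every edge $e=\{v_i,v_j\}\in E$ define the equation
\[
   y_i + y_j + u_e + z = 2.
\]
Each row of the corresponding 0/1 matrix $B$ has exactly four $1$'s (at positions $y_i$, $y_j$, $u_e$, $z$, all distinct since $i\ne j$ and the $u_e$'s are indexed by edges), so the resulting polytope is a legitimate $\DCP_d$ in the sense of Definition~\ref{DefDouble}.

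Now consider the face $F$ of $\DCP_d$ cut out by $z = 1$. Since $\DCP_d \subseteq [0,1]^d$, the inequality $z\le 1$ is valid, so $\{z=1\}$ is a supporting hyperplane and $F$ is indeed a face. On $F$ every constraint reduces to $y_i + y_j + u_e = 1$, which together with the 0/1 requirement forces $u_e = 1 - y_i - y_j$ and, in particular, $y_i + y_j \le 1$. Conversely, any stable set gives a 0/1 vector $(y, u, 1)$ lying in $F$ by the same formula for $u_e$. Thus the affine map $y\mapsto(y,\,1-y_i-y_j,\,1)$ is a bijection between the vertices of $\SSP_k$ and those of $F$, its projection $(y,u,z)\mapsto y$ is an affine inverse, so $F$ is affinely equivalent to $\SSP_k$ and hence $\SSP_k \leA \DCP_d$.

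There is not much of an obstacle here: the only delicate points are checking that the four entries in each row really are distinct (which is automatic because the slack variables $u_e$ and the anchor $z$ are all new coordinates), and that the set carved out by $z=1$ is a genuine face rather than just a sub-polytope. Once these are noted, the reduction is essentially the same as in the proof of $\SSP \propto_A \PART$, with the anchor variable $z$ playing the role of converting a ``sum $=1$'' partition constraint into a ``sum $=2$'' double-covering constraint.
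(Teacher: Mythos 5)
Your proposal is correct and is essentially identical to the paper's own proof: the paper likewise introduces slack variables $u_{ij}=1-y_i-y_j$ for each edge plus a single auxiliary variable $u_0$ (your anchor $z$), rewrites each constraint as $y_i+y_j+u_{ij}+u_0=2$, and takes the face $u_0=1$. Your additional remarks on why $\{z=1\}$ cuts a genuine face and why the four entries per row are distinct are correct and only make the argument slightly more explicit than the original.
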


\begin{proof}
Let us look at the~equation~\eqref{SSP}.
For every edge $\{v_i, v_j\}\in E$
 we consider auxiliary variable
 $u_{ij} = 1 - y_i - y_j$, $u_{ij} \in \{0, 1\}$.
Thus every inequality $y_i + y_j \le 1$ in~\eqref{SSP} 
 can be replaced by equality
\begin{equation}
\label{PrfSSP-DCP}
  y_i + y_j + u_{ij} = 1.
\end{equation}
Let $u_0$ be yet another auxiliary variable and let $u_0 = 1$.
Hence the~equality~\eqref{PrfSSP-DCP} is equivalent to 
$$
  y_i + y_j + u_{ij} + u_0 = 2.
$$
According to the~definition~\ref{DefDouble}, 
 a~system of such equalities together with the~requirement of integer variables 
 defines the~vertices of a~double covering polytope $\DCP_d$
 for $d = k + |E| + 1$.
The constraint $u_0 = 1$ defines a~face of this polytope.
Moreover this face is affinely equivalent to the~given $\SSP_k$.
\end{proof}

We now show that the~affine reducibility in the~opposite direction is not possible.
Note that the~NP-completness of adjacency relation is inherited when affine reducing
 (theorem~\ref{ThProp}).
The family of double covering polytopes has this property~\cite{Matsui:1995}, 
 whereas for a~stable set polytope the~checking of adjacency
 is polynomial~\cite{Chvatal:1975}.
Hence, if $P \neq NP$ then 
 $\DCP$ can not be affinely reduced to $\SSP$.
It turns out that the~latter is true even without the~assumption $P \neq NP$.

\begin{theorem}
\label{ThDCPnotSSP}
For the~double covering polytope 
$$
P = \conv\{x\in\{0, 1\}^4 \ | \ x_1 + x_2 + x_3 + x_4 = 2\}
$$ 
 the~relation $P \le \SSP_k$ does not hold for any $\SSP_k$.
\end{theorem}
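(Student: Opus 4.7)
The plan is to recognize $P$ as a three-dimensional octahedron and exploit its central symmetry to force a bipartite structure on the underlying graph that is incompatible with $F$ having exactly six vertices.

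First I would observe that the six 0/1-vectors in $\{0,1\}^4$ with coordinate-sum $2$ are centrally symmetric about $(\tfrac12,\tfrac12,\tfrac12,\tfrac12)$ and split into three antipodal pairs of complementary supports, yielding $P$ a 3-dimensional cross-polytope whose only non-adjacent vertex pairs are these three antipodal ones and whose three pair-midpoints all coincide with the centroid. Suppose for contradiction that $P$ is affinely equivalent to a face $F$ of $\SSP_k$ for some graph $G=([k],E)$; label the six vertices of $F$ as $1_{S_1},\ldots,1_{S_6}$ (characteristic vectors of stable sets of $G$) so that $\{S_1,S_4\}, \{S_2,S_5\}, \{S_3,S_6\}$ are the three antipodal pairs. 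Since affine equivalence preserves midpoints,
\[
1_{S_1} + 1_{S_4} = 1_{S_2} + 1_{S_5} = 1_{S_3} + 1_{S_6}.
\]

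Next I would extract the combinatorial structure. Let $M$ be the coordinate set where this common vector equals $2$ and $N$ where it equals $1$. Then $M = S_j \cap S_{j+3}$ is the same for all three $j$, and writing $T_j = S_j \setminus M$ one gets $S_j = M \cup T_j$, $T_j \subseteq N$, $T_{j+3} = N \setminus T_j$. Stability of $S_j$ and $S_{j+3}$ forces both $T_j$ and $N \setminus T_j$ to be independent in the induced subgraph $G[N]$ and forbids any edge of $G$ between $M$ and $N$; hence $G[N]$ is bipartite, and if its connected components are $C_1,\ldots,C_c$ with bipartitions $A_i\sqcup B_i$, each $T_j$ is encoded by an $\epsilon_j \in \{0,1\}^c$ (its choice of side per component). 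Since every $1_{T_j}-1_{T_1}$ lies in the span of the $c$ linearly independent vectors $1_{B_i}-1_{A_i}$, the equality $\dim F = 3$ forces $c \ge 3$, so in particular $2^c \ge 8 > 6$.

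The main step is a linear-functional argument. Because $F$ is a face of $\SSP_k$ there is a linear form $f$ whose minimum on $\SSP_k$ is attained exactly on $F$. For any $\epsilon \in \{0,1\}^c$ the set $M \cup T(\epsilon)$ (with $T(\epsilon)$ the corresponding union of bipartition sides) is a stable set of $G$, and a short computation gives
\[
f\bigl(1_{M\cup T(\epsilon)}\bigr) = f(1_{S_1}) + \tfrac12\,\eta\cdot\delta,
\]
where $\eta := 2\epsilon - (1,\ldots,1)$ and $\delta_i := f(1_{B_i}) - f(1_{A_i})$. The six $\epsilon_j$ satisfy $\eta_j\cdot\delta = 0$, and for $F$ to consist of exactly these six vertices every extra $\epsilon$ must satisfy $\eta\cdot\delta > 0$. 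But the set of extra $\epsilon$'s is closed under complementation $\epsilon \mapsto (1,\ldots,1)-\epsilon$ (which sends $\eta$ to $-\eta$) and is nonempty since $2^c - 6 \ge 2$, so some extra antipodal pair would have to satisfy both $\eta\cdot\delta > 0$ and $-\eta\cdot\delta > 0$: impossible. The offending extra $\epsilon$ therefore either gives a seventh minimizer of $f$ (contradicting that $F$ has six vertices) or yields a stable set of $G$ with $f$-value strictly below $f(1_{S_1})$ (contradicting that $F$ is the $f$-minimum face).

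The main obstacle will be the careful verification that the "extra" $\epsilon$'s correspond to genuine stable sets of $G$ that are distinct from $S_1,\ldots,S_6$; this relies on the absence of $M$-to-$N$ edges, which is itself a consequence of $M\subseteq S_j$ and $N \subseteq S_j \cup S_{j+3}$ together with stability. Once that point is secured, the final contradiction and hence the non-reducibility $P \not\leA \SSP_k$ follow formally.
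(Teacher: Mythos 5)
Your argument is correct, and it rests on the same central idea as the paper's proof: an affine copy of $P$ inside $\SSP_k$ forces three pairs of stable sets with a common midpoint, and the contradiction comes from exhibiting a \emph{fourth} pair of stable sets of $G$ whose characteristic vectors sum to that same vector, which is incompatible with $\conv\{1_{S_1},\ldots,1_{S_6}\}$ being a face. Where you genuinely differ is in how the extra pair is produced. The paper writes down an explicit ``parity'' set $S=(U\cap V\cap W)\cup(U\cap\bar{V}\cap\bar{W})\cup(\bar{U}\cap V\cap\bar{W})\cup(\bar{U}\cap\bar{V}\cap W)$ and its complement, then verifies by hand that both are stable and distinct from the six given sets (the distinctness check being the delicate point there). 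You instead extract the structure first: $S_j=M\cup T_j$, no edges between $M$ and $N$, $G[N]$ bipartite, and each $T_j$ determined by a side-choice $\epsilon_j\in\{0,1\}^c$ over the $c$ components of $G[N]$; then $c\ge 3$ gives $2^c\ge 8>6$, and closure of the six $\epsilon_j$ under complementation forces a spare complementary pair. This buys you two things: distinctness of the new stable sets from the old ones becomes automatic (distinct $\epsilon$'s give distinct sets $M\cup T(\epsilon)$), and the argument exposes the real obstruction, namely that the face sits affinely inside a $c$-cube of antipodal pairs. Two steps deserve to be written out in full in a final version: (i) that a partition of a connected bipartite component into two independent sets must coincide with its unique proper $2$-colouring, which is what justifies the encoding $T_j\mapsto\epsilon_j$ (isolated vertices of $G[N]$ being singleton components handled separately); and (ii) the absence of $M$--$N$ edges, which you correctly flag as the reason every $M\cup T(\epsilon)$ is stable. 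Your closing linear-functional computation is fine but slightly more than needed: once you have two new vertices $y^7,y^8$ of $\SSP_k$ with $y^7+y^8=y^1+y^2$, their midpoint lies in $F$ while neither endpoint does, which already contradicts $F$ being a face --- exactly the one-line finish the paper uses.
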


\begin{proof}
The polytope $P$ has 6 vertices 
$$
\begin{aligned}
  x^1 &= (1, 1, 0, 0),\\
  x^2 &= (0, 0, 1, 1),\\
  x^3 &= (1, 0, 1, 0),\\
  x^4 &= (0, 1, 0, 1),\\
  x^5 &= (1, 0, 0, 1),\\
  x^6 &= (0, 1, 1, 0).
\end{aligned}
$$
They are splitted into three pairs with the~following property
\begin{equation}
\label{3xpairs}
x^1 + x^2 = x^3 + x^4 = x^5 + x^6.
\end{equation}

Assume that $P$ is affine equivalent to some face $\conv\{y_1, \ldots, y_6\}$ of $\SSP_k$.
It is clear that the~vertices $y^1, y^2, \ldots, y^6$ inherit the~property~\eqref{3xpairs}:
\begin{equation}
\label{3pairs}
y^1 + y^2 = y^3 + y^4 = y^5 + y^6.
\end{equation}
We now show that there are two more vertices $y^7$ and $y^8$ of $\SSP_k$ with
\begin{equation}
\label{ThGoal}
  y^7 + y^8 = y^1 + y^2.
\end{equation}
This means that the~intersection of $\conv \{y^7, y^8\}$
 and $\conv\{y_1, \ldots, y_6\}$ is not empty.
That is $\conv\{y_1, \ldots, y_6\}$ is not a~face of $\SSP_k$.

For the~vertices $y^1 = (y^1_1, \ldots, y^1_k)$ and $y^2 = (y^2_1, \ldots, y^2_k)$
 we consider the~set
$$
I = \{i\in [k] \ | \ y^1_i = y^2_i\}.
$$
Note that every vertex of $\SSP_k$ is a~0-1 vector.
Thus, by~\eqref{3pairs} and~\eqref{ThGoal},
\begin{equation}
\label{eq:8}
 y^8_i = y^7_i =  \cdots = y^2_i = y^1_i \quad \text{ for } i\in I.
\end{equation}
Therefore we shall consider only those coordinates 
which values are different for every pair of vertices:
$$
 J = \{j\in [k] \ | \ y^1_j + y^2_j = 1\} = [k] \setminus I.
$$

Consider the~six sets
$$
\begin{aligned}
  U &= \{j\in J \ | \ y^1_j = 1\}, &\quad \bar{U} &= \{j\in J \ | \ y^2_j = 1\} = J \setminus U, \\
  V &= \{j\in J \ | \ y^3_j = 1\}, &\quad \bar{V} &= \{j\in J \ | \ y^4_j = 1\} = J \setminus V, \\
  W &= \{j\in J \ | \ y^5_j = 1\}, &\quad \bar{W} &= \{j\in J \ | \ y^6_j = 1\} = J \setminus W. \\
\end{aligned}
$$
All six sets are distinct, 
 otherwise there would be identical vertices among $y^1$, $y^2$, \dots, $y^6$.
Under this condition, the~two sets
$$
\begin{aligned}
 S       &= (U \cap V \cap W) \cup (U \cap \bar{V} \cap \bar{W}) \cup (\bar{U} \cap V \cap \bar{W}) \cup (\bar{U} \cap \bar{V} \cap W),\\
 \bar{S} &= (\bar{U} \cap V \cap W) \cup (\bar{U} \cap \bar{V} \cap \bar{W}) \cup (U \cap V \cap \bar{W}) \cup (U \cap \bar{V} \cap W) = J \setminus S
\end{aligned}
$$
 differ from each of the~above six.

Now we can define the~points $y^7$ and $y^8$:
$$
\begin{aligned}
y^7_i &= y^8_i = y^1_i, \quad i\in I,\\
y^7_i &= 1 - y^8_i = 1, \quad i\in S,\\
y^7_i &= 1 - y^8_i = 0, \quad i\in \bar{S}.
\end{aligned}
$$
This 0-1 points differ from $y^1, y^2, \ldots, y^6$
 and equality~\eqref{ThGoal} is satisfied for them.
It remains to prove that $y^7$ and $y^8$ belong to the~$\SSP_k$.
That is if $y_i + y_j \le 1$ holds for $y^1$, $y^2$, $y^3$, $y^4$, $y^5$, $y^6$
 then it holds for $y^7$ and $y^8$ also.

By equation~\eqref{eq:8}, this condition is satisfied for $i,j \in I$.
This is true for $i \in I$ and $j \in J$ also, since 
$$
\max(y^1_i + y^1_j, y^2_i + y^2_j) = y^1_i + 1 = y^7_i + 1 = \max(y^7_i + y^7_j, y^8_i + y^8_j).
$$
It remains to check the~condition for $i,j \in J$.
If $i \in S$ and $j \in \bar{S}$ then $y^7_i + y^7_j = y^8_i + y^8_j = 1$
 and the~condition is fulfilled.

Consider the~case $i,j \in S$.
If $i,j \in U$ then $y^1_i = y^1_j = 1$. 
Hence $y_i + y_j \le 1$ is violated by $y^1$ for $i,j \in U$.
The same is true if $i$ and $j$ both belong to one of the~sets 
  $\bar{U}$, $V$, $\bar{V}$, $W$, $\bar{W}$.
But for any $i$ and $j$ in $S$ the~latter is true.
For example, if $i \in U \cap V \cap W$ and $j \in \bar{U} \cap \bar{V} \cap W$
 then $i,j \in W$ and so on.
Hence for any $i,j\in S$ the~inequality $y_i + y_j \le 1$ is violated
 for at least one of the~vertices $y^1, \ldots, y^6$.

The same is true for the~case $i,j \in \bar{S}$ by symmetry.
\end{proof}


%
%
 

\section{Three index assignment polytopes}
\label{sec:3Ass}

Consider a~ground set $S$, $|S|=m$.
Coordinates of a~vector $x\in\R^{S\times S\times S}$
 we denote by $x(s, t, u)$, where $s,t,u\in S$.
Three index assignment (or 3-dimensional matching) problem can be formulated 
 as the~following 0-1 programming problem:
$$
  \sum_{s\in S} \sum_{t\in S} \sum_{u\in S} c(s,t,u) \cdot x(s,t,u) 
  \rightarrow \max,
$$ 
\begin{align}
  &\sum_{s\in S} \sum_{t\in S} x(s,t,u) = 1 \quad \forall u\in S,\label{eq:ThreeBeg}\\
  &\sum_{s\in S} \sum_{u\in S} x(s,t,u) = 1 \quad \forall t\in S,\label{eq:Three2}\\
  &\sum_{t\in S} \sum_{u\in S} x(s,t,u) = 1 \quad \forall s\in S,\label{eq:Three3}\\
  &x(s,t,u) \in \{0, 1\} \quad \forall s, t, u\in S,\label{eq:ThreeEnd}
\end{align}
where $c(s,t,u)\in\R$ is an input vector.
By $\TAP_m$ denote the~set of all vectors $x\in\R^{S\times S\times S}$
 satisfying restrictions~\eqref{eq:ThreeBeg}--\eqref{eq:ThreeEnd}.
The convex hull of $\TAP_m$ is called the~\emph{(axial) three index assignment polytope}.

The first results about this polytope can be found in~\cite{Euler:1987} and~\cite{Balas:1989}.
A more recent survey can be found in~\cite{Qi:2000}.
In the~Russian-language papers there are given the~lower bound for the~clique number
 of the~graph of $\TAP_m$~\cite{Bondarenko:2008}
 and various properties of noninteger vertices
 of relaxations of this polytope (see for example~\cite{Kravtsov:2006}).

It is obvious that $\TAP_m$ is a~special case of $\PART_d$:
\begin{equation}
\label{Ine3AP-PART}
\TAP_m \leA \PART_d \quad \text{for } d=m^3.
\end{equation}
That is the~family of three index assignment polytopes is 
 affinely reduced to set partition polytopes: $\TAP \propto_A \PART$.
 
Using a~standard reduction~\cite{Garey:1979} from 3SAT to 3-dimensional matching, 
 Avis and Tiwary~\cite{Avis:2013} showed that 3SAT polytope is a~projection
 of a~face of a~three index assignment polytope.
That is $3SAT \propto_E \TAP$ in the~sense of relation $\leE$.
However, from inequality~\eqref{Ine3AP-PART}, theorem~\ref{ThClass1}, 
 theorem~\ref{ThDCPnotSSP} and $\DCP \propto_A 3SAT$~\cite{Maksimenko:2012c}
 it follows that the~reduction $3SAT \propto_A \TAP$ is impossible.

Now we show, that $\SSP \propto_A \TAP$.
Therefore $\TAP$ lies in one equivalence class (in the~sense of $\propto_A$) 
 with $\SSP$, $\PART$, and $\PACK$.

For the~graph $G(V, E)$ in the~equation~\eqref{SSP}
we denote by
$$
W = \{v\in V \mid v\notin e \text{ for every }  e\in E\},
$$
the set of isolated vertices.

\begin{theorem}
\label{TheSSP-3AP}
$\SSP_k \le \TAP_m$
 for $m = 3|E| + 2|W|$.
\end{theorem}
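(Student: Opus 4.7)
The plan is to embed $\SSP_k$ as a face of $\TAP_m$ by designing a 3-dimensional assignment instance whose feasible matchings are in affine bijection with the stable sets of $G=(V,E)$. I would build the ground set $S$ with $m=3|E|+2|W|$ elements by assigning to each edge $e=\{v_a,v_b\}$ three elements---a slack marker $z_e$ and two incidence markers $v_a^e,v_b^e$---and to each isolated $w\in W$ two elements $p_w,q_w$. Attention is then restricted to a short list of \emph{allowed triples} organised into three gadget types. For each non-isolated $v_i$ of degree $d_i$, a \emph{vertex-mechanism} consisting of $d_i$ triples on the incidence markers $v_i^{e_1},\dots,v_i^{e_{d_i}}$ which together cover each such marker once in every position. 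For each edge $e=\{v_a,v_b\}$, three \emph{cases}: Case~A ($3$ triples arranging $\{z_e,v_a^e,v_b^e\}$ as a Latin cycle), Case~B ($2$ triples covering only $\{z_e,v_b^e\}$), and Case~C ($2$ triples covering only $\{z_e,v_a^e\}$). For each isolated $w$, two pairs of triples realising the states $y_w=0$ and $y_w=1$. Setting $x(\tau)=0$ for every triple $\tau$ outside this list cuts out a face $F$ of $\TAP_m$, since each such equality is the tight form of the valid inequality $x(\tau)\ge 0$.

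For every stable set $Y\subseteq V$ I would build the matching $M(Y)$ by activating the vertex-mechanism at each $v_i\in Y\setminus W$, using at each edge $e=\{v_a,v_b\}$ Case~A if $Y\cap\{v_a,v_b\}=\emptyset$, Case~B if $v_a\in Y$, and Case~C if $v_b\in Y$, and picking the appropriate state at each isolated vertex. A gadget-by-gadget tally gives exactly $m$ triples: the vertex-mechanisms supply $\sum_{v_i\in Y\setminus W}d_i=p_1$ triples, where $p_1$ counts the edges incident to $Y$, so the total is $p_1+3(|E|-p_1)+2p_1+2|W|=3|E|+2|W|=m$. A direct check within each gadget shows that $M(Y)$ covers every element of $S$ once in every position, so it is a vertex of $F$.

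The technically hard step---and the main obstacle---is the converse: every $0/1$-vertex of $F$ has the form $M(Y)$ for some stable set $Y$. The crux is an \emph{all-or-none} property of the vertex-mechanism: writing the three covering equations for each incidence element $v_i^e$ restricted to the allowed triples and combining them with $\{0,1\}$-integrality, a partially active cyclic vertex-mechanism at $v_i$ propagates forced but incompatible Case~A/B/C choices on the edges around $v_i$ and eventually double-covers either some slack element $z_e$ or an incidence marker at an adjacent vertex. Hence every vertex-mechanism is either fully activated or entirely off. Since Case~B at $v_a$ and Case~C at $v_b$ on the same edge $e=\{v_a,v_b\}$ both use $z_e$, they cannot coexist, so $Y=\{v_i:v_i\text{-mechanism active}\}$ must be a stable set and the matching in question coincides with $M(Y)$.

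Finally, I would exhibit the affine equivalence via the linear projection $\pi\colon\R^{S^3}\to\R^k$ defined by $y_i=x(\sigma_i)$, where $\sigma_i$ is a distinguished triple of the $v_i$-mechanism (or of the $y_w=1$ state for isolated $w$). Within each gadget every allowed triple's $x$-value is an affine function of $y$: all vertex-mechanism triples of $v_i$ share the value $y_i$; the Case~A, B, C triples of an edge $e=\{v_a,v_b\}$ carry $1-y_a-y_b$, $y_a$, $y_b$ respectively; isolated-vertex triples carry $y_w$ or $1-y_w$. Consequently $\pi$ is injective on $F$, and since $\pi(M(Y))$ is the characteristic vector of $Y$, the restriction $\pi|_F$ is an affine bijection $F\to\SSP_k$, establishing $\SSP_k\leA\TAP_m$.
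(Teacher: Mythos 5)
Your overall strategy coincides with the paper's: build a ground set of size $3|E|+2|W|$ whose elements play exactly the roles of your $z_e$, $v_a^e$, $v_b^e$, $p_w$, $q_w$; cut out a face $F$ of $\conv(\TAP_m)$ by setting $x(\tau)=0$ for all triples outside a designated list; and exhibit an affine bijection between $F$ and $\conv(\SSP_k)$ in which the edge gadget's covering equation degenerates to $y_a+y_b\le 1$. The triple count, the face argument, and the projection are all sound. The genuine gap is precisely the step you label as the crux: the claim that every $0/1$-vertex of $F$ is of the form $M(Y)$. Your ``all-or-none'' lemma for the vertex mechanisms, and the implicit claim that no mixture of Case~A/B/C triples at a single edge can survive, are asserted but not proved, and they cannot be checked as stated because the gadgets are underspecified: whether a partially active cyclic mechanism can be completed by a partial activation of an adjacent edge gadget depends on \emph{which} Latin arrangements you choose for the $d_i$ vertex triples and for the two-element Cases~B and~C (for instance, with Case~A triples $(z,a,b)$, $(a,b,z)$, $(b,z,a)$, activating only the first and third covers $a$ in positions $2$ and $3$ but not $1$, which is exactly the residual demand left by a partially active vertex mechanism, so the contradiction must be chased further into $z_e$ and $b$, and this chase has to be carried out for every degree and every position pattern). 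Until that case analysis is done for a concrete choice of triples, the theorem is not established.

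The paper avoids this combinatorial explosion by a different gadget design: the cyclic triples around a vertex $v$ are $\bigl((e_{i_q},v),(e_{i_{q+1}},v),e_{i_q}\bigr)$, i.e.\ they involve the \emph{edge element} in the third coordinate rather than being supported on incidence markers alone. The payoff is that every ground element then occurs in at most three allowed triples per coordinate, so each exact-cover equation of $\TAP_m$ restricted to the face is a two- or three-term linear identity. Chaining these identities forces $x\bigl(e,e,(e,v)\bigr)=x\bigl(e',e',(e',v)\bigr)$ for consecutive edges around $v$ and expresses every remaining variable affinely in these common values $y_v$, with the single surviving constraint $y_{v_1}+y_{v_2}\le 1$ per edge. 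No integrality or propagation-by-contradiction argument is needed: the converse direction is a purely linear computation valid for all points of the face, not just its vertices. I recommend either adopting that coupling of the vertex cycle to the edge elements, or, if you keep your decoupled design, fixing the Latin arrangements explicitly and writing out the covering equations to verify that partial activations are infeasible.
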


\begin{proof}
The ground set $S$ for the~$\TAP_m$
 will consist of three types of elements:
\begin{enumerate}
\item $v$ and $\bar{v}$ for every isolated vertex $v\in W$.

\item $e$ for every edge $e\in E$.

\item $(e, v)$ for every $e\in E$ and $v \in e$.
\end{enumerate}
Now we construct the~set of triples $Q \subset S\times S\times S$ 
 such that the~face
$$
 F = \big\{x \in \conv(\TAP_m) \mid x(q) = 0 \ \forall q \notin Q\big\}
$$ 
of $\conv(\TAP_m)$ is affinely equivalent to the~$\conv(\SSP_k)$.

For every $v\in W$ the~set $Q$ contains four triples:
 $(v, v, v)$, $(\bar{v}, \bar{v}, \bar{v})$,
 $(v, v, \bar{v})$, $(\bar{v}, \bar{v}, v)$.
There are no other triples containing $v$ or $\bar{v}$.
Hence, if $x \in F$ then for every $v\in W$
 we have only two cases: 
$$
 x(v, v, v) = x(\bar{v}, \bar{v}, \bar{v}) = 1
 \quad \text{or} \quad 
 x(v, v, \bar{v}) = x(\bar{v}, \bar{v}, v) = 1.
$$ 

Now we consider elements $e$ and $(e, v)$ of the~set $S$,
 where $e\in E$ and $v \in e$.
For every nonisolated vertex $v \in V \setminus W$ consider 
 the~set of incident edges $E(v) = \{e_{i_1}, \ldots, e_{i_p}\}$,
 where $p=d_G(v)$ is the~degree of $v$.
The set of triples $Q$ contains:
\begin{enumerate} 
\item $\big( e, e, e \big)$ for every $e\in E$.

\item $\big( (e,v), (e,v), (e,v) \big)$ for every $e\in E$ and $v \in e$.

\item $\big( e, e, (e,v) \big)$ for every $e\in E$ and $v \in e$.

\item $\big( (e_{i_q},v), (e_{i_{q+1}},v), e_{i_q} \big)$
   for every nonisolated $v$ and for every $e_{i_q} \in E(v)$,
   where addition $q+1$ meant to be $1 + q \mod p$.
\end{enumerate} 


We list some properties of the~vertices of the~face $F$.

Note that for every $(e,v) \in T$ the~set $Q$ contains exactly two triples
 with $(e,v)$ in the~third entry: $\big( (e,v), (e,v), (e,v) \big)$ and $\big( e, e, (e,v) \big)$.
Hence, the~equation~\eqref{eq:ThreeBeg} for $u = (e,v)$ is converted into 
$$
x\big( (e,v), (e,v), (e,v) \big) + x\big( e, e, (e,v) \big) = 1.
$$
That is, $x\big( (e,v), (e,v), (e,v) \big)$ is linearly expressed in $x\big( e, e, (e,v) \big)$.

Note also that for every $e \in T$ the~set $Q$ contains exactly three triples
 with $e$ in the~first entry: $\big( e, e, e \big)$, $\big( e, e, (e,v_1) \big)$, and $\big( e, e, (e,v_2) \big)$, where $e = \{v_1, v_2\}$.
Hence, the~equation~\eqref{eq:Three3} for $s = e$ is converted into 
$$
x\big( e, e, e \big) + x\big( e, e, (e,v_1) \big) + x\big( e, e, (e,v_2) \big) = 1.
$$
That is
$x\big( e, e, e \big) = 1 - x\big( e, e, (e,v_1) \big) - x\big( e, e, (e,v_2) \big)$
and 
\begin{equation}
\label{IneProof3AP}
x\big( e, e, (e,v_1) \big) + x\big( e, e, (e,v_2) \big) \le 1.
\end{equation}

Reasoning by analogy, we obtain the~following equation
$$
     x\big( (e_{i_q},v), (e_{i_q},v), (e_{i_q},v) \big) 
   + x\big( (e_{i_q},v), (e_{i_{q+1}},v), e_{i_q} \big) = 1
$$ 
for every nonisolated $v$ and for every $e_{i_q} \in E(v)$,
   where addition $q+1$ is performed modulo $p=d_G(v)$.
Hence,
$$
     x\big( (e_{i_q},v), (e_{i_{q+1}},v), e_{i_q} \big) = 
     1 - x\big( (e_{i_q},v), (e_{i_q},v), (e_{i_q},v) \big) =
     x\big( e_{i_q}, e_{i_q}, (e_{i_q},v) \big).
$$ 
Moreover, since 
$$
     x\big( (e_{i_{q+1}},v), (e_{i_{q+1}},v), (e_{i_{q+1}},v) \big) 
   + x\big( (e_{i_q},v), (e_{i_{q+1}},v), e_{i_q} \big) = 1
$$ 
then
$$
     x\big( e_{i_{q+1}}, e_{i_{q+1}}, (e_{i_{q+1}},v) \big) 
   = x\big( e_{i_q}, e_{i_q}, (e_{i_q},v) \big).
$$ 
That is $x\big( e, e, (e,v) \big) = x\big( e', e', (e',v) \big)$
 for any two edges $e$ and $e'$, $v \in e$, $v \in e'$.

It is not difficult to see that for the~vertices of the~face $F$ all variables $x(s,t,u)$
 are expressed linearly in terms of $x\big( e, e, (e,v) \big)$
 and an inequality~\eqref{IneProof3AP} is an inequality $y_i + y_j \le 1$
 in~\eqref{SSP}.
\end{proof}

\begin{remark}
The obtained results can be easily generalized to the~case
 of $p$ index assignment problem ($p > 3$).
By analogy, the~vertices $\PAP_m$ of an $p$ index assignment polytope
 are 0-1 vectors $x\in\R^{m^p}$.
The coordinates $x_{i_1 i_2 \ldots i_p}$
 ($i_1, i_2, \ldots, i_p \in \{1,2,\ldots,p\}$) satisfy the~following equalities:
$$
\begin{aligned}
  \sum_{i_2, i_3, \ldots, i_p} x_{i_1 i_2 \ldots i_p}      &= 1 \quad \forall i_1\in \{1,\ldots,p\},\\
  \sum_{i_1, i_3, i_4, \ldots, i_p} x_{i_1 i_2 \ldots i_p} &= 1 \quad \forall i_2\in \{1,\ldots,p\},\\
  \ldots\ldots & \ldots\\
  \sum_{i_1, i_2, \ldots, i_{p-1}} x_{i_1 i_2 \ldots i_p}  &= 1 \quad \forall i_p\in \{1,\ldots,p\}.\\
\end{aligned}
$$
It is evident that
$$
\PAP_m \le \PART_d, \quad \text{where } d=m^p.
$$
On the~other hand, the~equalities
$$
x_{i_1 i_2 \ldots i_p} = 0 \quad \forall i_p \not= i_{p-1}
$$
 determine a~face of $\PAP_m$ and this face is affinely equivalent to $\AssP{$(p-1)$-}_m$.
Therefore, by theorem~\ref{TheSSP-3AP}
$$
\SSP_k \le \PAP_m \quad \text{for } m = 2k(k-1).
$$
\end{remark}

%
%

\section{Quadratic linear ordering polytopes
          and~quadratic assignment polytopes}
\label{sec:QLOP}

We begin by describing the~linear ordering problem
 in terms of graph theory.
 
Let $D = (V, A)$ be a~digraph, 
 where $V = \{1, 2, \ldots, m\}$ is a~vertex set.
We assume that $D$ is complete.
That is $(i,j) \in A$ and $(j,i) \in A$ for any $i,j \in V$, $i \neq j$.
An acyclic tournament%
 \footnote{Each pair of vertices in a~tournament is connected by exactly one arc.}
 in digraph $D$ is called a~\emph{linear ordering}.

Consider a~characteristic vector $y \in \R^{m(m-1)/2}$
 for a~linear ordering $L$ in $D$.
The coordinates $y_{ij}$, $1 \le i < j \le m$, of $y$ are
$$
y_{ij} = 
\begin{cases}
1 &\text{for $(i,j)\in L$,}\\
0 &\text{for $(j,i)\in L$.}
\end{cases}
$$
Denote by $\LOP_m$ the~set of characteristic vectors of all linear orderings in $D$.
The convex hull of $\LOP_m$ is called the~\emph{linear ordering polytope}. 
$\LOP_m$ can also be defined as the~set of integer solutions $y\in\{0,1\}^{m(m-1)/2}$ 
 of the~3-dicycle inequalities (see for example~\cite{Grotschel:1985}):
\begin{equation}
\label{3cycle}
  0 \le y_{ij} + y_{jk} - y_{ik} \le 1, \quad i < j < k.
\end{equation}

In~\cite{Buchheim:2009} the~quadratic linear ordering polytope is defined as follows.
Let
$$
I = \big\{(i,j,k,l) \ | \ i<j, \ k<l, \text{ and } (i,j) \prec (k,l) \big\},
$$
where $(i,j) \prec (k,l)$ means that either $i<k$ or $i=k$ and $j < l$.
For every pair of distinct variables $y_{ij}$ and $y_{kl}$ there is introduced a~new variable
\begin{equation}
\label{eq:QLOP}
  z_{ijkl} = y_{ij} y_{kl}, \quad (i,j,k,l) \in I.
\end{equation}
Denote by $\QLOP_m$ the~set of all vectors $z\in\{0,1\}^d$, 
 $d = \binom{m}{2} \left( \binom{m}{2} + 1 \right) / 2$,
 with coordinates $y_{ij}$ and $z_{ijkl}$ satisfying conditions 
~\eqref{3cycle} and~\eqref{eq:QLOP}.
The convex hull of $\QLOP_m$ is called the~\emph{quadratic linear ordering polytope}.

\begin{theorem}[\cite{Buchheim:2009}] 
$\QLOP_m \leA \BQP_n$ for $n = \binom{m}{2}$.
\end{theorem}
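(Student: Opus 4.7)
The plan is to realize $\QLOP_m$ as a face of $\BQP_n$ via the natural identification of coordinates. Both polytopes sit in an ambient space of dimension $\binom{m}{2}(\binom{m}{2}+1)/2$, so I would identify each variable $y_{ij}$ of $\QLOP_m$ with a diagonal coordinate of $\BQP_n$ (indexing BQP's $n$ diagonal coordinates by unordered pairs $\{i,j\}$ with $1\le i<j\le m$) and each variable $z_{ijkl}$ with the corresponding off-diagonal coordinate of $\BQP_n$. Under this identification the constraints $z_{ijkl} = y_{ij} y_{kl}$ from~\eqref{eq:QLOP} become precisely the quadratic constraints defining $\BQP_n$ in~\eqref{BQP}. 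Hence every vertex of $\QLOP_m$ is a vertex of $\BQP_n$, and the only remaining discrepancy between the two polytopes amounts to the 3-dicycle inequalities~\eqref{3cycle}.

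To encode these inequalities as face-defining equations of $\BQP_n$, I would introduce, for each triple $1 \le i < j < k \le m$, the linear form
$$
g_{ijk} = z_{ijjk} - z_{ijik} - z_{ikjk} + y_{ik}.
$$
At any vertex of $\BQP_n$ the $z$-variables are genuine products, so the identity
$$
g_{ijk} = y_{ij} y_{jk}(1 - y_{ik}) + (1 - y_{ij})(1 - y_{jk})\, y_{ik}
$$
holds on the vertex set. A direct check on the eight patterns of $(y_{ij}, y_{jk}, y_{ik}) \in \{0,1\}^3$ then shows $g_{ijk} \in \{0,1\}$ on vertices of $\BQP_n$, with value $1$ occurring exactly at the two patterns $(1,1,0)$ and $(0,0,1)$ forbidden by~\eqref{3cycle}. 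By linearity the inequality $g_{ijk} \ge 0$ extends to all of $\BQP_n$, so $\{g_{ijk} = 0\} \cap \BQP_n$ is a face.

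Taking the intersection of these faces over all triples produces a face $F$ of $\BQP_n$ whose vertex set consists precisely of the $0/1$ vectors $y \in \{0,1\}^{\binom{m}{2}}$ satisfying every 3-dicycle constraint; after undoing the coordinate identification this is exactly the vertex set of $\QLOP_m$. Since $F$ and $\conv(\QLOP_m)$ share a vertex set in a common ambient coordinate system, they are affinely equivalent, giving $\QLOP_m \leA \BQP_n$ as claimed.

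The one nontrivial step is the middle one: guessing the correct linear combination $g_{ijk}$ of $\BQP_n$ coordinates whose zero locus isolates the six admissible 3-dicycle patterns from the two forbidden ones. The factorization displayed above makes the case analysis transparent and simultaneously supplies the nonnegativity needed to ensure that $\{g_{ijk}=0\}$ actually determines a face of $\BQP_n$ rather than a mere hyperplane section.
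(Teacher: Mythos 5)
Your argument is correct, and it is essentially the argument of the cited source \cite{Buchheim:2009} (the paper itself states this theorem without proof): your $g_{ijk}$ is exactly the linearization of $-\tfrac12\,(y_{ij}+y_{jk}-y_{ik})(1-y_{ij}-y_{jk}+y_{ik})$ using $y^2=y$ and $z_{ijkl}=y_{ij}y_{kl}$, which is how the 3-dicycle constraints are turned into face-defining valid equations there. The only cosmetic remark is that your face $F$ is not merely affinely equivalent to $\conv(\QLOP_m)$ but literally equal to it under the coordinate identification, which is all the stronger.
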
 

Buchheim, Wiegele, and Zheng~\cite{Buchheim:2009} 
 exploit this result within a~branch-and-cut algorithm
 for solving the~quadratic linear ordering problem.

We show that an affine reduction in the~opposite direction is also possible.

\begin{theorem}
\label{ThBQP-QLOP}
$\BQP_n \leA \QLOP_m$ for $m = 2n$.
\end{theorem}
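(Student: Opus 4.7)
The plan is to realize $\BQP_n$ as the face $F$ of $\QLOP_{2n}$ consisting of those linear orderings of the ground set $\{1,2,\ldots,2n\}$ in which, for every $r<s$, both elements of the ``block'' $\{2r-1,2r\}$ precede both elements of the block $\{2s-1,2s\}$. Within each block the relative order of $2r-1$ and $2r$ is still free, so exactly $2^n$ orderings survive, matching the vertex count of $\BQP_n$.

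First I would cut out $F$ by imposing the $4\binom{n}{2}$ equalities $y_{ij}=1$ for every $i<j$ lying in distinct blocks; each such equation pins a coordinate to its maximum on $\QLOP_{2n}$ and therefore defines a face. The only $y$-variables left free are the intra-block values $a_r := y_{2r-1,2r}$, $r=1,\ldots,n$. A case check on the 3-dicycle inequalities~\eqref{3cycle} over triples $i<j<k$ in $\{1,\ldots,2n\}$ shows that each inequality either becomes a tautology (for instance when the triple lies in three distinct blocks) or reduces to $0\le a_r\le 1$ (when exactly two of the three indices form one block); in particular, no inequality couples distinct $a_r$'s, so the vertices of $F$ are freely parametrized by $(a_1,\ldots,a_n)\in\{0,1\}^n$.

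Next I would classify the $z$-variables on $F$ by whether the factors $y_{ij}$ and $y_{kl}$ of $z_{ijkl}=y_{ij}y_{kl}$ are inter-block or intra-block. If both are inter-block then $z_{ijkl}=1$; if exactly one factor is an intra-block variable $a_r$ then $z_{ijkl}=a_r$; and if both are intra-block, for blocks $r<s$, then $z_{(2r-1)(2r)(2s-1)(2s)}=a_r a_s$. Thus every coordinate at a vertex of $F$ is either a constant, some $a_r$, or some product $a_r a_s$. The map $\phi\colon \BQP_n \to F$ sending $x_{rr}\mapsto y_{2r-1,2r}$ and $x_{rs}\mapsto z_{(2r-1)(2r)(2s-1)(2s)}$ for $r<s$ (with the remaining coordinates of $\QLOP_{2n}$ filled in as dictated by the classification above) then extends to an affine bijection between $\BQP_n$ and $F$, yielding $\BQP_n\leA\QLOP_{2n}$.

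The main obstacle I anticipate is the exhaustive but unenlightening bookkeeping required by the two case analyses: verifying that every 3-dicycle inequality, as the triple $i<j<k$ ranges over all distributions across blocks, becomes a tautology or bounds a single $a_r$; and that every index $(i,j,k,l)\in I$ falls into exactly one of the three classes above with no hidden affine dependence among the $a_r a_s$. Both reductions are mechanical once the block notation is fixed, but care is needed to cover mixed configurations such as $\{2r-1,2r,2s-1\}$ and $\{2r-1,2s-1,2s\}$.
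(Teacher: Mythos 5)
Your proposal is correct and follows essentially the same route as the paper: both realize $\BQP_n$ on the face of $\QLOP_{2n}$ obtained by freezing every inter-block variable $y_{ij}$ to a constant (the paper fixes them to $0$, you to $1$), leaving an $n$-dimensional cube face of $\LOP_{2n}$ whose quadratic lift via the $z_{ijkl}$ is affinely $\BQP_n$. The choice of frozen value is immaterial.
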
 

\begin{proof}
The idea of the~proof is simple.
$\LOP_m$ contains an $n$-dimensional cube as a~proper face.
In the~transformation $\LOP_m$ to $\QLOP_m$ this cube turns into a~Boolean quadratic polytope.

Note that equalities $y_{ij} = 0$ and $y_{ij} = 1$ 
 defines supported hyperplanes for $\LOP_m$ and for $\QLOP_m$.
Let
$$
J = \{(2i-1,2i) \mid i\in[n] \}.
$$ 
We set 
\begin{equation}
\label{FaceQube}
y_{ij} = 0 \quad \text{for all } (i,j)\notin J, \ 1 \le i < j \le m.
\end{equation}
Only variables $y_{ij}$ are unfixed where $i$ is odd and $j = i+1$.
Let us check 3-dicycle inequalities~\eqref{3cycle}.
Suppose $i < j < k$, we have two cases:
\begin{enumerate} 
\item If $(i,j)\notin J$ then $y_{ij} = y_{ik} = 0$.
  Thus the~inequality~\eqref{3cycle} 
  is transformed into $0 \le y_{jk} \le 1$. 
\item If $(i,j)\in J$ then $i$ is odd, $j = i+1$ is even, and $k > i+1$.
  Hence, the~inequality~\eqref{3cycle} is equivalent to $0 \le y_{ij} \le 1$.
\end{enumerate} 
Therefore $n$ variables $y_{i\, i+1}$, where $i$ is odd, 
 may take the~values 0 or 1 independently of each other.
Consequently, hyperplanes~\eqref{FaceQube} define a~face of $\LOP_m$ 
 and this face is an $n$-dimensional cube. 

Look at the~variables $z_{ijkl}$, $(i,j,k,l) \in I$.
If $(i,j) \notin J$ or $(k,l) \notin J$ then $z_{ijkl}=0$.
In the~case $(i,j) \notin J$ and $(k,l) \notin J$
 we have $z_{ijkl}=y_{ij} y_{kl}$, 
  and besides $y_{ij}$ and $y_{kl}$ are free variables.

Thus there is the~following nondegenerate affine map between the~face of $\QLOP_m$ and $\BQP_n$:
$$
\begin{aligned}
  x_{ii} &= y_{2i-1,\, 2i}, \quad 1\le i\le n,\\
  x_{ij} &= z_{2i-1,\, 2i,\, 2j-1,\, 2j} = y_{2i-1,\, 2i} \cdot y_{2j-1,\, 2j}, \quad 1\le i < j\le n.
\end{aligned}
$$
\end{proof}

The story for quadratic assignment polytopes is repeated almost exactly.

The set of vertices $\DAP_m$ of the~\emph{assignment polytope} (or \emph{Birkhoff polytope}) 
 consists of vectors $y\in\{0,1\}^{m\times m}$ satisfying the~conditions
\begin{align}
 \sum_j y_{ij} = 1, \ \forall i \in [m], \label{eq:Ass1}\\ 
 \sum_i y_{ij} = 1, \ \forall j \in [m]. \label{eq:Ass2}
\end{align}

Define new variable $z_{ijkl}$ like~\eqref{eq:QLOP}:
\begin{equation}
\label{eq:QAP}
  z_{ijkl} = y_{ij} y_{kl}, \text{ where } (i,j) \prec (k,l).
\end{equation}
Denote by $\QAP_m$ the~set of all vectors $z\in\{0,1\}^d$, $d = m^2 + \binom{m^2}{2}$,
 with coordinates $y_{ij}$ and $z_{ijkl}$ satisfying conditions 
~\eqref{eq:Ass1},~\eqref{eq:Ass2}, and~\eqref{eq:QAP}.
The convex hull of $\QAP_m$ is called the~\emph{quadratic assignment polytope}.
It is also useful to consider the~\emph{quadratic semi-assignment polytope}~\cite{Saito:2009}.
In the~definition of its vertex set $\QSAP_m$ the~condition~\eqref{eq:Ass2} is omitted.

\begin{theorem}[\cite{Rijal:1995, Kaibel:1997, Saito:2009}] 
$\QAP_m \leA \QSAP_m \leA \BQP_n$ for $n = m^2$.
\end{theorem}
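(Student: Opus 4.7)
The plan is to establish each half by the same device: exhibit the smaller polytope as the face of the larger one cut out by a single linear inequality whose value at every vertex happens to be a sum of squares.

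For $\QSAP_m \leA \BQP_{m^2}$, I would index the $\BQP_{m^2}$-coordinates by pairs $(i,j)\in[m]\times[m]$, matching the diagonal variable $x_{(i,j)(i,j)}$ with $y_{ij}$ and the off-diagonal $x_{(i,j)(k,l)}$ with $z_{ijkl}$, so that every $\QSAP_m$-vertex is automatically a $\BQP_{m^2}$-vertex. At such a vertex the quantity $\sum_i \bigl(1-\sum_j y_{ij}\bigr)^2$ can be expanded, using the vertex identities $y_{ij}^2=y_{ij}$ and $y_{ij}y_{ik}=z_{ijik}$, into an honest linear form $\varphi$ in the $\BQP_{m^2}$-coordinates. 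By construction $\varphi \ge 0$ on every vertex, and hence by convexity throughout $\BQP_{m^2}$, with equality iff $\sum_j y_{ij}=1$ for every $i$. Thus $\{\varphi=0\}$ is a face of $\BQP_{m^2}$ whose vertex set is exactly the image of $\QSAP_m$, delivering the required affine equivalence.

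For $\QAP_m \leA \QSAP_m$ the same trick works on column sums: since $\sum_{i,j}y_{ij}=m$ is constant on $\QSAP_m$, the scalar $\sum_j\bigl(1-\sum_i y_{ij}\bigr)^2$ collapses at a $\QSAP_m$-vertex to $2\sum_{j}\sum_{i<k} z_{ijkj}$, a linear functional in the $z$-coordinates (note $(i,j)\prec(k,j)$ for $i<k$, so $z_{ijkj}$ is defined). This form is manifestly nonnegative on $\QSAP_m$, and it vanishes at a vertex precisely when every column holds at most one $1$; combined with the row sums depositing $m$ ones into $m$ columns, "at most one" upgrades to "exactly one", so the vanishing locus is a face of $\QSAP_m$ whose vertices are exactly the $\QAP_m$-vertices.

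The main obstacle, and the only nontrivial point, is verifying that an a priori quadratic "sum of squares" expression actually descends to a genuine linear functional on the ambient polytope. This is exactly where the BQP/QSAP relations $y_{ij}^2 = y_{ij}$ and $y_{ij}y_{kl} = z_{ijkl}$ are essential: they rewrite each square in terms of the existing $(y,z)$-coordinates. Once that linearization is in place, nonnegativity transfers from the vertices to the whole polytope by convexity, and the identification of the face with the smaller polytope follows immediately.
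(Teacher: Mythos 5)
Your argument is correct, and every step checks out. Note, though, that the paper itself gives no proof of this statement---it is imported from the cited references---so there is no internal proof to compare against; your linearization of $\sum_i\bigl(1-\sum_j y_{ij}\bigr)^2$ via the vertex identities $y_{ij}^2=y_{ij}$ and $y_{ij}y_{il}=z_{ijil}$ is precisely the standard device used in those sources to exhibit $\QSAP_m$ as a face of $\BQP_{m^2}$. Concretely, the functional $\varphi=\sum_i\bigl(1-\sum_j y_{ij}+2\sum_{j<l}z_{ijil}\bigr)$ is affine in the $\BQP_{m^2}$-coordinates, agrees with a sum of squares at every vertex, hence is nonnegative on the whole polytope by convexity, and its zero set meets the vertex set exactly in the coordinate-renamed copy of $\QSAP_m$ (here it is worth stating explicitly, as you leave it implicit, that the vertex set of the face $\{\varphi=0\}$ consists precisely of the vertices of $\BQP_{m^2}$ lying on that hyperplane, so the face really is $\conv(\QSAP_m)$ and the affine equivalence is just a renaming of coordinates). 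For the second inclusion the sum-of-squares detour is not even necessary: since all coordinates of a $0/1$-polytope are nonnegative, $\sum_{j}\sum_{i<k}z_{ijkj}\ge 0$ is already a valid inequality for $\conv(\QSAP_m)$, and on its face of equality every column contains at most one $1$, which your counting argument (all $m$ row-ones distributed over $m$ columns) correctly upgrades to exactly one, identifying that face with $\conv(\QAP_m)$.
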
 

This connection is used in~\cite{Kaibel:1997} for obtaining valid inequalities for $\QAP_m$.
In particular, $\QAP_m$ is a~2-neighborly polytope (every two vertices constitute an edge of it),
since $\BQP_n$ is 2-neighborly.
In~\cite{Kaibel:1997} it is also shown that the~linear ordering polytope $\LOP_m$
 and the~traveling salesman polytope $\TSP_m$ are projectons of $\QAP_m$:
$$
 \LOP_m \leE \QAP_m,  \quad  \TSP_m \leE \QAP_m.
$$ 
Note that the~affine reductions 
 $\LOP \propto_A \QAP$ and $\TSP \propto_A \QAP$ are impossible,
 since $\LOP_m$ is not 2-neighborly for $m \ge 3$~\cite{Young:1978}
 and $\TSP_m$ is not 2-neighborly for $m \ge 6$~\cite{Padberg:1974}.

\begin{theorem}
$\BQP_n \leA \QAP_m$ for $m = 2n$.
\end{theorem}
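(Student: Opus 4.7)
The plan is to mimic the proof of Theorem~\ref{ThBQP-QLOP}: identify a face of the Birkhoff polytope $\DAP_{2n}$ that is affinely an $n$-dimensional cube, and then observe that in passing from $\DAP_{2n}$ to $\QAP_{2n}$ the quadratic variables $z_{ijkl}$ turn this cube into a copy of $\BQP_n$. The cube will be cut out by supporting hyperplanes of the form $y_{ij}=0$, which are supporting for both $\DAP_{2n}$ and $\QAP_{2n}$ (since $y_{ij}\ge 0$ is valid on them), so fixing these coordinates to $0$ yields a face of $\QAP_{2n}$.

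Concretely, I would partition the row-index set $[2n]$ into consecutive pairs $\{2i-1,2i\}$, $i\in[n]$, and do the same for the columns. Let $F$ be the face of $\DAP_{2n}$ defined by $y_{ij}=0$ for all $(i,j)$ with $\lceil i/2\rceil \ne \lceil j/2\rceil$, i.e.\ off the four diagonal $2\times 2$ blocks. The row and column equalities \eqref{eq:Ass1}--\eqref{eq:Ass2} then reduce inside each block to
\[
y_{2i-1,2i-1}+y_{2i-1,2i}=1,\quad y_{2i,2i-1}+y_{2i,2i}=1,\quad y_{2i-1,2i-1}+y_{2i,2i-1}=1,
\]
which force $y_{2i,2i}=y_{2i-1,2i-1}$ and $y_{2i-1,2i}=y_{2i,2i-1}=1-y_{2i-1,2i-1}$. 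Thus each block is determined by the single $0/1$ quantity $x_{ii}:=y_{2i-1,2i-1}$, and these $n$ quantities vary independently, so $F$ is an $n$-cube with exactly $2^n$ vertices.

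Next I would lift to $\QAP_{2n}$ by taking the face $\widetilde F$ of $\QAP_{2n}$ cut out by the same equalities $y_{ij}=0$ (off-block) together with all $z_{ijkl}=0$ for which either $(i,j)$ or $(k,l)$ is off-block (each of which is also supporting, since $z_{ijkl}=y_{ij}y_{kl}\ge 0$ and both factors are nonnegative). The remaining $z$-variables satisfy $z_{ijkl}=y_{ij}y_{kl}$ where both $(i,j)$ and $(k,l)$ lie in diagonal blocks, hence are polynomials in $x_{11},\dots,x_{nn}$ of degree at most two. I then define the map $\varphi:\conv(\BQP_n)\to\widetilde F$ by
\[
\begin{aligned}
y_{2i-1,2i-1}=y_{2i,2i}&=x_{ii},\qquad y_{2i-1,2i}=y_{2i,2i-1}=1-x_{ii},\\
z_{2i-1,2i-1,\,2j-1,2j-1}&=x_{ij},\qquad 1\le i<j\le n,
\end{aligned}
\]
with all other $y$'s set to $0$ and every other $z_{ijkl}$ on $\widetilde F$ written as the appropriate affine combination of $\{x_{ii}\}$ and $\{x_{ij}\}$ (for instance $z_{2i-1,2i,\,2j-1,2j}=(1-x_{ii})(1-x_{jj})=1-x_{ii}-x_{jj}+x_{ij}$, and similarly for the mixed cases).

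The map $\varphi$ is affine by construction, and its inverse is the projection $x_{ii}=y_{2i-1,2i-1}$, $x_{ij}=z_{2i-1,2i-1,\,2j-1,2j-1}$, which is also affine; on vertices $\varphi$ sends the $\BQP_n$-vertex indexed by $\sigma\in\{0,1\}^n$ to the unique vertex of $\widetilde F$ with $x_{ii}=\sigma_i$, giving a bijection of vertex sets. Hence $\varphi$ is a nondegenerate affine equivalence $\BQP_n=_A\widetilde F\leA \QAP_{2n}$. The main bookkeeping obstacle is to check that \emph{every} non-vanishing $z_{ijkl}$ on $\widetilde F$ is indeed an affine function of the BQP coordinates; this reduces to a small case analysis over the possible positions of $(i,j)$ and $(k,l)$ within the same or different $2\times 2$ blocks, each of which produces one of the identities $x_{ii}$, $1-x_{ii}$, $x_{ij}$, $x_{ii}-x_{ij}$, $x_{jj}-x_{ij}$, or $1-x_{ii}-x_{jj}+x_{ij}$.
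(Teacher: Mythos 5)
Your proposal is correct and takes essentially the same route as the paper: the paper's proof defines exactly the same face of the Birkhoff polytope $\DAP_{2n}$ (the index set $J$ there is precisely the set of positions in the $n$ diagonal $2\times 2$ blocks), notes that it is an $n$-cube, and appeals to the analogy with the proof of Theorem~\ref{ThBQP-QLOP}; you simply spell out the in-block equalities and the affine identification of the lifted $z$-variables that the paper leaves implicit. (Only a cosmetic slip: there are $n$ diagonal $2\times2$ blocks, not four, and your list of identities omits the constant $0$ arising from same-block cross products such as $y_{2i-1,2i-1}y_{2i-1,2i}$.)
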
 

\begin{proof}
By analogy with the~proof of theorem~\ref{ThBQP-QLOP}
 it is sufficient to show that the~Birk\-hoff polytope $\DAP_m$
 has an $n$-dimensional cube as a~face.
Let
$$
J = \big\{(i,i) \mid i\in[m]\big\} \cup \big\{(2i-1,2i) \mid i\in[n]\big\} \cup \big\{(2i,2i-1) \mid i\in[n]\big\}.
$$ 
Then the~equalities
$$
 y_{ij} = 0 \text { for every } (i,j) \notin J
$$
define the~required face.
\end{proof}

\section{Resume}

Boolean quadratic polytopes, cut polytopes,
 quadratic linear ordering polytopes, and quadratic assignment polytopes
 are in one class of equivalence within the~framework of affine reducibility.
A bit more complicated class contains stable set polytopes, set packing polytopes, 
 set partitioning polytopes, and $n$-index assignment polytopes for $n \ge 3$.
An even more complicated are double covering polytopes, 3-sa\-tis\-fi\-abi\-li\-ty polytopes,
 set covering polytopes, knapsack polytopes, cubic subgraph polytopes, 
 partial ordering polytopes, traveling salesman polytopes.
The problem of partitioning of these families into equivalence classes is not solved completely.
Nevertheless, Fiorini~\cite{Fiorini:2003} proved that $k$-satisfiability polytopes and
 $m$-satisfiability polytopes are in different classes for $k \not= m$.
Moreover all of them are simpler than traveling salesman polytopes. 
From the~other hand,
 the~families of so-called Boolean $p$-power polytopes also are in different classes
 for distinct values of $p$~\cite{Maksimenko:2013a}.
Besides, Boolean $p$-power polytopes are simpler than Boolean quadratic polytopes 
 (in the~sense of affine reduction).

However if in the~definition of affine reducibility (definition~\ref{DefAff}) 
 we replace the~relation $\leA$ by relation $\leE$
 (recall that we write $p \leE q$ if a~face of a~polytope $q$ is an extension of $p$)
 then all the~mentioned families of polytopes fall into one class of equivalence,
 since the~polytope $P$ of any linear combinatorial optimization problem\footnote{see footnote on the page \pageref{foot:LCOP}} is an affine image
 of a~face of $\BQP_n$, where $n$ is polynomial in the~dimension of $P$~\cite{Maksimenko:2012b}.

Thus the~affine reduction is a~more delicate instrument (versus extending)
 for comparing the~families of combinatorial polytopes.
The most complicated (more precisely, the~richest in its properties) 
 is a~family of traveling salesman polytopes.
Families of Boolean $p$-power polytopes are more simple than any other of the~above.
More precisely they contain the~minimum number of superfluous details 
 (with respect to other families associated with NP-hard problems).
Moreover, apparently, combinatorial and geometric properties 
determining NP-hardness reach the~highest concentration 
precisely in Boolean $p$-power polytopes.

Proceed to a~more precise formulation.
Using the~above results it is easy to derive the~following relations.
\begin{enumerate}
\item $\BQP_n \leA \SSP_k \leA \PACK_k$ for $k = n(n+1)$.
\item $\BQP_n \leA \PART_k$ for $k = 2 n^2$.
\item $\BQP_n \leA \DCP_k$ for $k = 2 n^2 + 1$.
\item $\BQP_n \leA \TAP_k \leA \AssP{$p$-}_k $ for $k = 6 n^2 + 3n$ and $p \ge 3$.
\item $\BQP_n \leA \QLOP_k$ for $k = 2 n$.
\item $\BQP_n \leA \QAP_k$ for $k = 2 n$.
\end{enumerate}
That is any characteristic of complexity of $\BQP_n$ is inherited 
 by the~above families of polytopes.
For example, in 2012 Fiorini et al.~\cite{Fiorini:2012} proved that
 the~extension complexity of $\BQP_n$ is exponential in $n$.
In 2013 the~lower bound was improved to $1.5^n$ by Kaibel and Weltge~\cite{Kaibel:2013}.
Hence, the~extension complexity of $\QLOP_k$ and $\QAP_k$ is also exponential in $k$.
The extension complexity of $\SSP_k$, $\PART_k$, $\DCP_k$, and $\TAP_k$ is $2^{\Omega(n^{1/2})}$.
The same conclusions can be done for the~clique numbers of graphs of the~polytopes,
 since the~clique number for $\BQP_n$ is $2^n$.


%
%


\begin{thebibliography}{99}

\bibitem{Avis:2013}
  D. Avis and H.R. Tiwary.
  On the~Extension Complexity of Combinatorial Polytopes.
  \emph{Automata, Languages, and Programming, LNCS}, 7965: 57--68, 2013.

\bibitem{Balas:1989}
  E. Balas and M.J. Saltzman.
  Facets of the~three-index assignment polytope.
  \emph{Discrete Applied Mathematics}, 23(3): 201--229, 1989.

\bibitem{Billera:1996} 
  L.J. Billera, A. Sarangarajan.
  All 0-1 polytopes are travelling salesman polytopes.
  \emph{Combinatorica}, 16(2): 175--188, 1996.

\bibitem{Bondarenko:2008} 
  V.A. Bondarenko, A.N. Maksimenko.
  \emph{Geometric constructions and complexity in combinatorial optimization.}
  (in Russian)
  Moscow: LKI, 2008. 

\bibitem{Buchheim:2009} 
  C. Buchheim, A. Wiegele, and L. Zheng.
  Exact Algorithms for the~Quadratic Linear Ordering Problem.
  \emph{INFORMS Journal on Computing}, 22(1): 168--177, 2010.

\bibitem{Chvatal:1975}
  V. Chvatal.
	On certain polytopes associated with graphs. 
	\emph{J. Combin. Theory, Ser. B}, 18: 138--154, 1975.

\bibitem{Conforti:2010}
  M. Conforti, G. Cornu\'ejols, and G. Zambelli. 
  Extended formulations in combinatorial optimization. 
  \emph{A Quarterly Journal of Operations Research}, 8(1): 1--48, 2010.

\bibitem{Dantzig:1954}
  G.B. Dantzig, D.R. Fulkerson, and S.M. Johnson.
  Solution of a~large-scale traveling salesman problem.
  \emph{Operation Research}, 2(4): 393--410, 1954. 

\bibitem{DeSimone:1990}
  C. De Simone. 
  The cut polytope and the~Boolean quadric polytope. 
  \emph{Discrete Math.}, 79: 71--75, 1990.

\bibitem{Deza:1997}
  M.M. Deza, M. Laurent.
  \emph{Geometry of cuts and metrics},
  Springer, 1997.

\bibitem{Euler:1987}
  R. Euler.
  Odd cycles and a~class of facets of the~axial 3-index assignment polytope.
  \emph{Applicationes Mathematicae (Zastosowania Matematyki)}, 19: 375--386, 1987.

\bibitem{Fiorini:2003}
  S. Fiorini.
  A combinatorial study of partial order polytopes. 
  \emph{Eur. J. Comb.}, 24(2): 149--159, 2003.

\bibitem{Fiorini:2012}
  S. Fiorini, S. Massar, S. Pokutta, H.R. Tiwary, R. de Wolf.
  Linear vs. semidefinite extended formulations: exponential separation 
  and strong lower bounds. \emph{In STOC}, 95--106, 2012.

\bibitem{Garey:1979}
  M.R. Garey and D.S. Johnson.
  \emph{Computers and Intractability: A Guide to the~Theory of NP-Completeness.}
  W.H. Freeman and Co. 1979.

\bibitem{Grotschel:1985} 
  M. Gr\"otschel, M. J\"unger, and G. Reinelt.
  Facets of the~linear ordering polytope.
  \emph{Math. Program.}, 33: 43--60, 1985.

\bibitem{Kaibel:1997} 
  V. Kaibel.
  \emph{Polyhedral combinatorics of the~quadratic assignment problem}, 
  Ph.D. Thesis, Universit\"at zu K\"oln, 1997.

\bibitem{Kaibel:2011}
  V. Kaibel.
  Extended Formulations in Combinatorial Optimization.
  arXiv: 1104.1023, 2011.

\bibitem{Kaibel:2013}
  V. Kaibel, S. Weltge.
  A Short Proof that the~Extension Complexity of the~Correlation Polytope Grows Exponentially.
  arXiv:1307.3543, 2013.

\bibitem{Kravtsov:2006}
  V.M. Kravtsov.
  A characterization of the~types of maximal noninteger vertices 
  of a~polyhedron in the~three-index axial assignment problem.
  \emph{Russian Mathematics}, 50(12): 63--66, 2006.


\bibitem{Maksimenko:2011a}
  A.N. Maksimenko.
  Polytopes of the~satisfiability problem are faces of the~polytopes of the~traveling salesman
problem.
  (in Russian)
  \emph{Diskretn. Anal. Issled. Oper}, 18(3): 76--83, 2011.


\bibitem{Maksimenko:2011c} 
  A.N. Maksimenko.
  About universal properties of a~cut polytope. (in Russian)
  \emph{Problemy teoreticheskoi kibernetiki. 
  Abstracts of 16th international conference.}
  Nizhni Novgorod, June 20--25, 2011, pp.~294--296.

\bibitem{Maksimenko:2012c} 
  A.N. Maksimenko. 
  On affine reducibility of combinatorial polytopes.
  \emph{Doklady Mathematics}, 85(2): 283--285, 2012.

\bibitem{Maksimenko:2012a} 
  A. Maksimenko.
  Affine reducibility. 
  \emph{International topological conference "Alexandroff readings".
  Moscow, May 21--25, 2012. Abstracts.} P. 46.

\bibitem{Maksimenko:2012b} 
  A.N. Maksimenko. 
  An analog of the~Cook theorem for polytopes.
  \emph{Russian Mathematics}, 56(8): 28--34, 2012.

\bibitem{Maksimenko:2013a}
  A.N. Maksimenko. 
  k-Neighborly faces of the~Boolean quadric polytopes. (in Russian)
  \emph{Fundamentalnaya i prikladnaya matematika}, 18(2): 95--103, 2013.

\bibitem{Maksimenko:2013b}
  A.N. Maksimenko. 
  The common face of some 0/1-polytopes with NP-complete non-adjacency relation. (in Russian)
  \emph{Fundamentalnaya i prikladnaya matematika}, 18(2): 105--118, 2013.
  
\bibitem{Matsui:1995}
  T. Matsui.
  NP-completeness of non-adjacency relations on some 0-1-polytopes.
  \emph{Lecture Notes in Operations Research}, 1: 249--258, 1995.

\bibitem{Padberg:1974} 
  M.W. Padberg and M.R. Rao.
  The travelling salesman problem and a~class of polyhedra of diameter two.
  \emph{Math. Program.}, 7: 32--45, 1974.

\bibitem{Padberg:1989} 
  M.W. Padberg. 
	The boolean quadric polytope: 
	some characteristics, facets and relatives.
	\emph{Math. Program.}, 45: 139--172, 1989.

\bibitem{Papadimitriou:1978}
  C.H. Papadimitriou.
  The adjacency relation on the~traveling salesman polytope is NP-complete. 
  \emph{Math. Program.}, 14(1): 312--324, 1978.

\bibitem{Qi:2000}
  L. Qi and D. Sun 
  Polyhedral methods for solving three index assignment problems.
  \emph{Nonlinear Assignment Problems: Algorithms and Applications}, 
  P.M. Pardalos and L.S. Pitsoulis, eds.
  Kluwer Academic Publisher, Dordrecht, Netherlands, pp.~91--107, 2000.

\bibitem{Rijal:1995}
  M.P. Rijal. 
  \emph{Scheduling, design and assignment problems with quadratic costs.} 
  PhD thesis, New York University, 1995.

\bibitem{Saito:2009}
  H. Saito, T. Fujie, T. Matsui, Sh. Matuura.
  A study of the~quadratic semi-assignment polytope.
  \emph{Discrete Optimization}, 6(1): 37--50, 2009.

\bibitem{Yannakakis:1991} 
  M. Yannakakis.
  Expressing combinatorial optimization problems by linear programs. 
  \emph{J. Comput. Syst. Sci.}, 43(3): 441--466, 1991.


\bibitem{Young:1978} 
  H.P. Young.
  On permutations and permutation polytopes.
  \emph{Mathematical Programming Study}, 8: 128--140, 1978.

\bibitem{Ziegler:2000} 
  G.M. Ziegler.
  Lectures on 0/1-Polytopes.
	In \emph{Polytopes --- Combinatorics and Computation}, 29: 1--41, 2000.

\end{thebibliography}
\end{document}